
\documentclass[letterpaper, 10 pt, conference]{ieeeconf}  

\IEEEoverridecommandlockouts                              

\overrideIEEEmargins                                      


\usepackage{mathtools}
\usepackage{epsfig} 
\usepackage{subfigure}
\usepackage{times} 
\usepackage{amsmath} 
\usepackage{amssymb}  
\usepackage{pdfsync}
\usepackage{enumerate}
\usepackage{color}
\usepackage{lineno}
\usepackage{amsmath,bm}
\usepackage{cite}

\usepackage{dsfont} 

\usepackage{graphicx}
	\graphicspath{{figures/}}

\usepackage{tikz}
	\usetikzlibrary{calc}
	\usetikzlibrary{shapes,arrows}
	\tikzstyle{frame} = [draw, -latex]
	\tikzstyle{lineUD} = [draw]
	\tikzstyle{line} = [draw, -latex']
	\tikzstyle{line2} = [draw, -latex', dashdotted]
	\tikzstyle{line3} = [draw, -latex', dashed]
	\tikzstyle{line3UD} = [draw, dashed]
	\tikzstyle{place} = [circle, draw=black, fill=white, thick, inner sep=2pt, minimum size=1mm]
	\tikzstyle{placeRed} = [circle, draw=red, fill=red, thick, inner sep=2pt, minimum size=1mm]
	\tikzstyle{vertex} = [circle, draw=black, fill=black, thick, inner sep=2pt, minimum size=1mm]
\usepackage{tkz-euclide}

\newtheorem{definition}{Definition}[section]
\newtheorem{problem}{Problem}[section]

\newtheorem{lemma}{Lemma}[section]
\newtheorem{theorem}{Theorem}[section]
\newtheorem{corollary}{Corollary}[section]


\newcommand{\mbf}{\mathbf}

\newcommand{\myemph}{\emph}
\newcommand{\myfig}{Fig.~}
\DeclareMathOperator{\rank}{{rank}}

\DeclarePairedDelimiter \card{\lvert}{\rvert}
\DeclarePairedDelimiter \norm{\|}{\|}

\providecommand\given{}
\newcommand\SetSymbol[1][]{\nonscript\;#1\vert\nonscript\;
\mathopen{}\allowbreak}
\DeclarePairedDelimiterX\set[1]\{\}{%
\renewcommand\given{\SetSymbol[\delimsize]}
#1
}

\title{\LARGE \bf
Infinitesimal Weak Rigidity, Formation Control of Three Agents, and Extension to $3$-dimensional Space
}

\author{Seong-Ho Kwon${}^{\dagger}$, Minh Hoang Trinh${}^{\dagger}$, Koog-Hwan Oh${}^{\dagger}$, Shiyu Zhao${}^{\ddagger}$, and Hyo-Sung Ahn${}^{\dagger}$
\thanks{$^{\dagger}$School of  Mechanical Engineering, Gwangju Institute of Science and Technology (GIST), Gwangju, Korea.
        {\tt\small \{seongho, trinhhoangminh, ohkhwan, hyosung\}@gist.ac.kr}
	}
\thanks{$^{\ddagger}$Department of Automatic Control and Systems Engineering, University of Sheffield, UK.
		{\tt\small szhao@sheffield.ac.uk}
	}
}

\begin{document}

\maketitle
\thispagestyle{empty}
\pagestyle{empty}

\begin{abstract}
In this paper, we introduce new concepts of weak rigidity matrix and infinitesimal weak rigidity for planar frameworks. The weak rigidity matrix is used to directly check if a framework is infinitesimally weakly rigid while previous work can check a weak rigidity of a framework indirectly. An infinitesimal weak rigidity framework can be uniquely determined up to a translation and a rotation (and a scaling also when the framework does not include any edge) by its inter-neighbor distances and angles. We apply the new concepts to a three-agent formation control problem with a gradient control law, and prove instability of the control system at any incorrect equilibrium point and convergence to a desired target formation. Also, we propose a modified Henneberg construction, which is a technique to generate minimally rigid (or weakly rigid) graphs.  Finally, we extend the concept of the weak rigidity in $\mathbb{R}^{2}$ to the concept in $\mathbb{R}^{3}$. 
\end{abstract}
\section{INTRODUCTION}
\label{Sec:intro}

Rigid formation shape is an important requirement in many formation control and network localization problems. Specific or fixed formation shape may be useful for sensing agents, localizing agents, moving agents from one location to another and moving objects. A lot of control methods to achieve a target formation shape have been reported in the literature \cite{oh2015survey,krick2009stabilisation,helmke2014geometrical,sun2015rigid,zhao2016bearing}. One of the formation control methods is \myemph{distance-constrained (distance-based)} formation control \cite{krick2009stabilisation,helmke2014geometrical}, where the target formation is achieved by ​obtaining the inter-agent distances. Another one is \myemph{bearing-constrained (bearing-based) formation control} \cite{zhao2016bearing,bishop2011stabilization} where the target formation is achieved by obtaining the inter-agent bearings. Also, there is a mixed method of distance and bearing constrained formation control \cite{bishop2015distributed}. Another one is to make use of only relative angles \cite{buckley2017infinitesimally} where maintains the target formation by sensing relative angle measurements.

In the distance-constrained formation control problem, one approach to characterize a unique formation shape (at least locally) is the (distance) \myemph{rigidity} of a framework \cite{asimow1979rigidity}. In the bearing-constrained formation control, the theory to characterize unique formation shape is the \myemph{bearing rigidity} of a framework \cite{zhao2016bearing,zelazo2014rigidity}. In a mixed method of distance and bearing constrained formation control, there is no specific rigidity theory. In \cite{bishop2015distributed}, the authors developed a control law using inter-agent bearing and distance constraints. Recently, in particular, the only angle constrained formation control \cite{buckley2017infinitesimally} and new rigidity theory with distance and subtended-angle constraints, named \myemph{weak rigidity} \cite{park2017rigidity}, were introduced. In \cite{buckley2017infinitesimally}, they make use of a shape-similarity matrix to preserve a formation shape by only using relative angle measurements. If the null space of the shape-similarity matrix includes trivial motions only up to a translation, a rotation and a scaling, then the formation shape is preserved. This concept is similar to the (distance) rigidity and bearing rigidity. In \cite{park2017rigidity}, a formation shape ​whose​ shape can be (locally) uniquely determined specified by inter-agent distance and subtended-angle constraints is considered to be weakly rigid even though it is non-rigid in the viewpoint of (distance) rigidity. However, whether the formation is weakly rigid cannot be determined directly from the original framework. The method proposed in \cite{park2017rigidity} requires to transform the original framework into another framework with distance-only constraints. Then, if this transformed framework is  rigid, we can conclude that the original framework is weakly rigid. Thus, it is inconvenient to check the weak rigidity based on the proposed method in \cite{park2017rigidity}.

In this paper, our main contributions are summarized as follows. First, we provide new concepts of \myemph{weak rigidity matrix} and \myemph{infinitesimal weak rigidity} in the two-dimensional space. For a given framework in $\mathbb{R}^{2}$, we propose a method to construct a corresponding weak rigidity matrix from the set of mixed distance- and angle-contraints. The rank of the weak rigidity matrix can be used to check infinitesimal weak rigidity of the framework. A framework defined by a set of mixed distance- and angle-constraints is infinitesimally weakly rigid if the null space of its weak rigidity matrix is spanned by only rigid body translations and rotations. Moreover, if an infinitesimally weakly rigid framework is specified by only some angle constraints, the null space of the weak rigidity matrix contains also scalings. As a result, the existing distance rigidity and bearing rigidity theories in the literature could be unified into the weak rigidity theory.
Second, we apply the concept of the infinitesimal weak rigidity to a formation control with three agents in the two-dimensional space. We prove that the three-agent formation at any incorrect equilibrium is unstable by investigating the eigenvalues of the Jacobian of the formation system. We prove that the system converges to a desired target formation from almost global initial positions. Also, we introduce a modified Henneberg construction using an angle extension. The construction is used to grow minimally rigid formations, which are useful in designing a formation control strategy \cite{mou2015target,trinh2016bearing}. Finally, we extend the concept of the weak rigidity \cite{park2017rigidity} in the two-dimensional space to the concept in the three-dimensional space. 

The rest of this paper is organized as follows. Section \ref{Sec:weakRigidity} briefly reviews the background of the weak rigidity in $\mathbb{R}^{2}$. Section \ref{Sec:Infinitesimally_weakRigidity} provides the new concepts of the weak rigidity matrix and infinitesimal weak rigidity. The relation between infinitesimal weak rigidity and the rank of the weak rigidity matrix is also established. In Section \ref{Sec:Formation Control Problem}, we provide the analysis of the instability of incorrect equilibria and the convergence result of a three-agent formation system. In Section \ref{Sec:MHenneberg}, we discuss and define the modified Henneberg construction. In Section \ref{Sec:weakRigidity_3dim}, the weak rigidity is extended from the two-dimensional space to the concept in the three-dimensional space.
Lastly, conclusion and summary are provided in Section \ref{Sec:conclusion}.

\myemph{Preliminaries and Notations}: The notation $\norm{\cdot}$ means the Euclidean norm of a vector and the notation $\card{\mathcal{S}}$ means the cardinality of a set $\mathcal{S}$. 
Let $K_n$ denote a complete graph with $n$ vertices s.t. $K_n = (\mathcal{V}_K,\mathcal{E}_K)$, then an undirected graph $\mathcal{G}$ is defined as $\mathcal{G} = (\mathcal{V},\mathcal{E},\mathcal{A})$, where a vertex set $\mathcal{V}=\set{1,2,...,n}$, an edge set $\mathcal{E} \subseteq \mathcal{V} \times \mathcal{V}$ with $m=\card{\mathcal{E}}$ and an angle set $\mathcal{A} = \set{(k,i,j) \given \theta_{ij}^{k} \text{ is assigned to } (i,k), (j,k) \in \mathcal{E}_K, \theta_{ij}^{k} \in [0,\pi]}$ with $q=\card{\mathcal{A}}$. We assume that duplicated edges between any two vertices do not exist, e.g., $(i,j) = (j,i)$ for all $i,j \in \mathcal{V}$.
The $\theta_{ij}^{k}$ means the angle subtended by the adjacent edges $(i,k)$ and $(j,k)$. The set of neighbors of vetex $i$ is denoted as $\mathcal{N}_i = \set{j \in \mathcal{V} \given (i,j) \in \mathcal{E}}$. For a position vector $\mbf{p}_i \in \mathbb{R}^{2}$, a configuration of $\mathcal{G}$ in $\mathbb{R}^{2}$ is defined as $\mbf{p} = [\mbf{p}_{1}^\top,...,\mbf{p}_{n}^\top]^\top \in \mathbb{R}^{2n}$, and a framework is defined as $(\mathcal{G},\mbf{p})$. Two frameworks $(\mathcal{G},\mbf{p})$ and $(\mathcal{G},\mbf{q})$ are said to be \myemph{congruent} if $\norm{\mbf{p}_{i}-\mbf{p}_{j}}=\norm{\mbf{q}_{i}-\mbf{q}_{j}}$ for all $i,j \in \mathcal{V}$. Also, two frameworks $(\mathcal{G},\mbf{p})$ and $(\mathcal{G},\mbf{q})$ are said to be \myemph{equivalent} if $\norm{\mbf{p}_{i}-\mbf{p}_{j}}=\norm{\mbf{q}_{i}-\mbf{q}_{j}}$ for all $(i,j) \in \mathcal{E}$. For a framework $(\mathcal{G},\mbf{p})$, the relative position vector and the relative distance are defined as $\mbf{z}_{ij} = \mbf{p}_{i} - \mbf{p}_{j}$  and $d_{ij} = \norm{\mbf{z}_{ij}}$, respectively, for all $(i,j)\in \mathcal{E}$.
Let Null$(\cdot)$ and rank$(\cdot)$ be the null space and the rank of a matrix, respectively. Denote $I_N \in \mathbb{R}^{N \times N}$ as an identity matrix, and $\mathds{1} = [1, ..., 1]^\top$. The perpendicular operator $J \in \mathbb{R}^{2 \times 2}$ is denoted as $J \triangleq \begin{bmatrix}
0 &-1\\
1 & 0
\end{bmatrix}.$
We assume that i) there is no $\myemph{self-loop}$, i.e. $(i,i) \notin \mathcal{E}$ for any vertex $i \in \mathcal{V}$, ii) formations are undirected, and iii) there are no position vectors collocated at one point. 

\section{Background of the Weak Rigidity Theory}
\label{Sec:weakRigidity}
In this section, we briefly review the concepts of the weak rigidity in \cite{park2017rigidity}. The weak rigidity theory is concerned with frameworks defined by distance constraints and additional subtended-angle constraints in $\mathbb{R}^2$. Distance constraints and additional subtended-angle constraints are required to achieve a unique formation shape under the weak rigidity theory. 

\begin{definition}
\label{Def:strongEquiv}
With $n \ge 3$, two frameworks $(\mathcal{G},\mbf{p})$ and $(\mathcal{G},\mbf{q})$ are said to be \myemph{strongly equivalent} if the following two conditions hold:
\begin{itemize}
\item $\norm{\mbf{p}_{v}-\mbf{p}_{w}} = \norm{\mbf{q}_{v}-\mbf{q}_{w}}, \forall (v,w) \in \mathcal{E}$,
\item ${\theta_{ij}^{k}}_{\in(\mathcal{G},\mbf{p})} = {\theta_{ij}^{k}}_{\in(\mathcal{G},\mbf{q})}, \forall (k,i,j) \in \mathcal{A}$,
\end{itemize}
\end{definition}
where ${\theta_{ij}^{k}}_{\in(\mathcal{G},\mbf{p})}$ and ${\theta_{ij}^{k}}_{\in(\mathcal{G},\mbf{q})}$ denote the subtended angles  in $(\mathcal{G},\mbf{p})$ and $(\mathcal{G},\mbf{q})$, respectively.

\begin{definition}
\label{Def:weakRigidity}
A framework $(\mathcal{G},\mbf{p})$ is \myemph{weakly rigid} in $\mathbb{R}^{2}$ if there exists a neighborhood $\mathcal{B}_{\mbf{p}} \subseteq \mathbb{R}^{2n}$ of $\mbf{p}$ such that each framework $(\mathcal{G},\mbf{q})$, $\mbf{q} \in \mathcal{B}_{\mbf{p}}$, strongly equivalent to $(\mathcal{G},\mbf{p})$ is congruent to $(\mathcal{G},\mbf{p})$.
\end{definition}

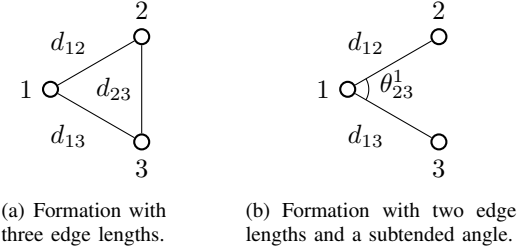
\begin{figure}[t]\setcounter{subfigure}{0}
\centering
\subfigure[Formation with three edge lengths.]{ \label{Formation_3e}
\begin{tikzpicture}[scale=0.7]
\node[place] (node1) at (-1.732,0) [label=left:$1$] {};
\node[place] (node2) at (0,1) [label=above:$2$] {};
\node[place] (node3) at (0,-1) [label=below:$3$] {};

\draw[lineUD] (node1)  -- node [above left] {$d_{12}$} (node2);
\draw[lineUD] (node1)  -- node [below left] {$d_{13}$} (node3);
\draw[lineUD] (node2)  -- node [left] {$d_{23}$} (node3);
\end{tikzpicture}
}
\qquad\quad 
\subfigure[Formation with two edge lengths and a subtended angle.]{ \label{Formation_3e1a}
\qquad\begin{tikzpicture}[scale=0.7]
\node[place] (node1) at (-1.732,0) [label=left:$1$] {};
\node[place] (node2) at (0,1) [label=above:$2$] {};
\node[place] (node3) at (0,-1) [label=below:$3$] {};

\draw[lineUD] (node1)  -- node [above left] {$d_{12}$} (node2);
\draw[lineUD] (node1)  -- node [below left] {$d_{13}$} (node3);

\draw [black, domain=330:390] plot ({0.4*cos(\x) - 1.732}, {0.4*sin(\x)}) node at (-1.3,0.1) [right] {$\theta_{23}^{1}$};
\end{tikzpicture} \qquad 
}
\caption{\small Two different but congruent triangular formations. \normalsize} \label{Triangular_formations}
\end{figure}

Two congruent frameworks are illustrated in Fig. \ref{Triangular_formations}. Fig. \ref{Formation_3e} is defined by three edge lengths while the other in Fig. \ref{Formation_3e1a} is defined by two edge lengths and a subtended angle with the condition $({d}_{23})^{2} = ({d}_{12})^{2} + ({d}_{13})^{2} - 2{d}_{12}{d}_{13}\cos\theta_{23}^{1}$ induced from the law of cosines. The two formations can be changed to each other with the condition induced by the law of cosines. That is, either three distance constraints or two distance constraints with a subtended angle can define the same triangular formation. 

\section{Infinitesimal Weak Rigidity} 
\label{Sec:Infinitesimally_weakRigidity}
In this section, we introduce the weak rigidity matrix and infinitesimal weak rigidity, and provide a rank condition of the weak rigidity matrix to determine if a framework is infinitesimally weakly rigid in $\mathbb{R}^{2}$ in a straightforward way. In \cite{park2017rigidity}, an angle $\theta_{ij}^{k}$ must be defined with adjacent two edges, i.e. $(i,k)$, $(j,k)\in \mathcal{E}$. However, with the weak rigidity matrix, the adjacent edges do not need to be defined. For example, we can check whether a framework with only angle constraints is infinitesimally weakly rigid or not by a rank condition of weak rigidity matrix.

\subsection{Weak Rigidity Matrix}

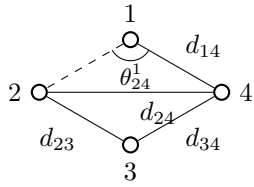
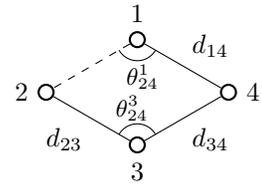
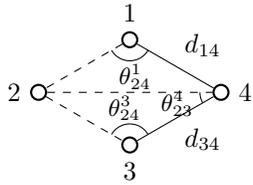
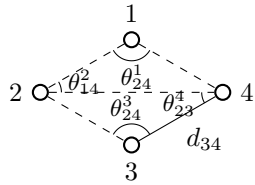
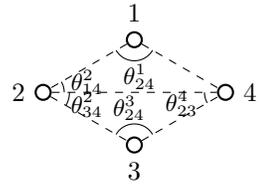
\begin{figure*}[b]\setcounter{subfigure}{0}
\centering
\subfigure[Rigid and infinitesimally weakly rigid framework.]{ \label{inf_ex01}
\begin{tikzpicture}[scale=0.7]
\node[place] (node1) at (0,1) [label=above:$1$] {};
\node[place] (node2) at (-1.732,0) [label=left:$2$] {};
\node[place] (node3) at (0,-1) [label=below:$3$] {};
\node[place] (node4) at (1.732,0) [label=right:$4$] {};

\draw[lineUD] (node1)  -- node [above left] {$d_{12}$} (node2);
\draw[lineUD] (node1)  -- node [above right] {$d_{14}$} (node4);
\draw[lineUD] (node2)  -- node [below left] {$d_{23}$} (node3);
\draw[lineUD] (node3)  -- node [below right] {$d_{34}$} (node4);
\draw[lineUD] (node2)  -- node [above] {$d_{24}$} (node4);
\end{tikzpicture} \qquad 
}\quad\quad\quad
\subfigure[Infinitesimally weakly rigid framework with constraints of 4 edges and an angle.]{ \label{inf_ex02}
\qquad\begin{tikzpicture}[scale=0.7]
\node[place] (node1) at (0,1) [label=above:$1$] {};
\node[place] (node2) at (-1.732,0) [label=left:$2$] {};
\node[place] (node3) at (0,-1) [label=below:$3$] {};
\node[place] (node4) at (1.732,0) [label=right:$4$] {};

\draw[dashed] (node1)  -- (node2);
\draw[lineUD] (node1)  -- node [above right] {$d_{14}$} (node4);
\draw[lineUD] (node2)  -- node [below left] {$d_{23}$} (node3);
\draw[lineUD] (node3)  -- node [below right] {$d_{34}$} (node4);
\draw[lineUD] (node2)  -- node [below right] {$d_{24}$} (node4);
\draw [black, domain=210:330] plot ({0.4*cos(\x)}, {0.4*sin(\x) + 1}) node at (0.1,0.3) {\small$\theta_{24}^{1}$};
\end{tikzpicture} \qquad
}\quad\quad\quad
\subfigure[Infinitesimally weakly rigid framework with constraints of 3 edges and 2 angles.]{ \label{inf_ex03}
\qquad\begin{tikzpicture}[scale=0.7]
\node[place] (node1) at (0,1) [label=above:$1$] {};
\node[place] (node2) at (-1.732,0) [label=left:$2$] {};
\node[place] (node3) at (0,-1) [label=below:$3$] {};
\node[place] (node4) at (1.732,0) [label=right:$4$] {};

\draw[dashed] (node1)  -- (node2);
\draw[lineUD] (node1)  -- node [above right] {$d_{14}$} (node4);
\draw[lineUD] (node2)  -- node [below left] {$d_{23}$} (node3);
\draw[lineUD] (node3)  -- node [below right] {$d_{34}$} (node4);
\draw [black, domain=210:330] plot ({0.4*cos(\x)}, {0.4*sin(\x) + 1}) node at (0.1,0.3) {\small$\theta_{24}^{1}$};
\draw [black, domain=30:150] plot ({0.4*cos(\x)}, {0.4*sin(\x) - 1}) node at (-0.1,-0.3) {\small$\theta_{24}^{3}$};
\end{tikzpicture} \qquad 
}\newline 
\subfigure[Infinitesimally weakly rigid framework with constraints of 2 edges and 3 angles.]{ \label{inf_ex04}
\begin{tikzpicture}[scale=0.7]
\node[place] (node1) at (0,1) [label=above:$1$] {};
\node[place] (node2) at (-1.732,0) [label=left:$2$] {};
\node[place] (node3) at (0,-1) [label=below:$3$] {};
\node[place] (node4) at (1.732,0) [label=right:$4$] {};

\draw[dashed] (node1)  -- (node2);
\draw[lineUD] (node1)  -- node [above right] {$d_{14}$} (node4);
\draw[dashed] (node2)  -- (node3);
\draw[lineUD] (node3)  -- node [below right] {$d_{34}$} (node4);
\draw[dashed] (node2)  -- (node4);
\draw [black, domain=210:330] plot ({0.4*cos(\x)}, {0.4*sin(\x) + 1}) node at (0.1,0.3) {\small$\theta_{24}^{1}$};
\draw [black, domain=30:150] plot ({0.4*cos(\x)}, {0.4*sin(\x) - 1}) node at (-0.1,-0.3) {\small$\theta_{24}^{3}$};
\draw [black, domain=180:210] plot ({0.4*cos(\x)+1.732}, {0.4*sin(\x)}) node at (0.9,-0.21) {\small$\theta_{23}^{4}$};
\end{tikzpicture} \qquad \qquad
}\quad\quad\quad
\subfigure[Infinitesimally weakly rigid framework with constraints of an edge and 4 angles.]{ \label{inf_ex05}
\begin{tikzpicture}[scale=0.7]
\node[place] (node1) at (0,1) [label=above:$1$] {};
\node[place] (node2) at (-1.732,0) [label=left:$2$] {};
\node[place] (node3) at (0,-1) [label=below:$3$] {};
\node[place] (node4) at (1.732,0) [label=right:$4$] {};

\draw[dashed] (node1)  -- (node2);
\draw[dashed] (node1)  -- (node4);
\draw[dashed] (node2)  -- (node3);
\draw[lineUD] (node3)  -- node [below right] {$d_{34}$} (node4);
\draw[dashed] (node2)  -- (node4);
\draw [black, domain=210:330] plot ({0.4*cos(\x)}, {0.4*sin(\x) + 1}) node at (0.1,0.3) {\small$\theta_{24}^{1}$};
\draw [black, domain=30:150] plot ({0.4*cos(\x)}, {0.4*sin(\x) - 1}) node at (-0.1,-0.3) {\small$\theta_{24}^{3}$};
\draw [black, domain=180:210] plot ({0.4*cos(\x)+1.732}, {0.4*sin(\x)}) node at (0.9,-0.21) {\small$\theta_{23}^{4}$};
\draw [black, domain=0:30] plot ({0.4*cos(\x)-1.732}, {0.4*sin(\x)}) node at (-0.9,0.2) {\small $\theta_{14}^{2}$};
\end{tikzpicture} \qquad \qquad
}\quad\quad\quad
\subfigure[Infinitesimally weakly rigid framework with constraints of 5 angles.]{ \label{inf_ex06}
\begin{tikzpicture}[scale=0.7]
\node[place] (node1) at (0,1) [label=above:$1$] {};
\node[place] (node2) at (-1.732,0) [label=left:$2$] {};
\node[place] (node3) at (0,-1) [label=below:$3$] {};
\node[place] (node4) at (1.732,0) [label=right:$4$] {};

\draw[dashed] (node1)  -- (node2);
\draw[dashed] (node1)  -- (node4);
\draw[dashed] (node2)  -- (node3);
\draw[dashed] (node3)  -- (node4);
\draw[dashed] (node2)  -- (node4);
\draw [black, domain=210:330] plot ({0.4*cos(\x)}, {0.4*sin(\x) + 1}) node at (0.1,0.3) {\small$\theta_{24}^{1}$};
\draw [black, domain=30:150] plot ({0.4*cos(\x)}, {0.4*sin(\x) - 1}) node at (-0.1,-0.3) {\small$\theta_{24}^{3}$};
\draw [black, domain=180:210] plot ({0.4*cos(\x)+1.732}, {0.4*sin(\x)}) node at (0.9,-0.21) {\small$\theta_{23}^{4}$};
\draw [black, domain=0:30] plot ({0.4*cos(\x)-1.732}, {0.4*sin(\x)}) node at (-0.9,0.2) {\small$\theta_{14}^{2}$};
\draw [black, domain=330:360] plot ({0.5*cos(\x)-1.732}, {0.5*sin(\x)}) node at (-0.9,-0.23) {\small$\theta_{34}^{2}$};
\end{tikzpicture} \qquad
}
\caption{Infinitesimally weakly rigid frameworks under different characterizations} \label{inf_ex}
\end{figure*}

For any edge $(i,j) \in \mathcal{E}$ and any angle $(k,i,j) \in \mathcal{A}$, consider the associated relative position vector (edge vector) and cosine defined as
$\mbf{z}_{g} \triangleq \mbf{z}_{ij}, \forall g\in \{ 1,..., m\}$ and
$A_{h} \triangleq \cos{\theta_{h}}, \forall h\in \{ 1,..., q\}$, respectively,
where $\theta_h = \theta_{ij}^{k}$ and $\cos{\theta_{ij}^{k}}
= \left[\frac{\norm{\mbf{z}_{ik}}^{2} + \norm{\mbf{z}_{jk}}^{2} - \norm{\mbf{z}_{ij}}^{2}}{2\norm{\mbf{z}_{ik}}\norm{\mbf{z}_{jk}}}\right]$ induced by the law of cosines.
The \myemph{weak rigidity function} $F_W: \mathbb{R}^{2n} \rightarrow \mathbb{R}^{(m+q)}$ is defined as follows:
$$
F_{W}(\mbf{p}) \triangleq [ \norm{\mbf{z}_{1}}^2, ... ,\norm{\mbf{z}_{m}}^2, A_{1}, ... ,A_{q}]^\top
\in \mathbb{R}^{(m+q)}. 
$$
The weak rigidity function describes the length of edges and subtended angles in the framework.
The \myemph{weak rigidity matrix} is defined as the Jacobian of the weak rigidity function: 
\begin{equation}\label{Weak_rigidity_matrix}
R_{W}(\mbf{p}) \triangleq \frac{\partial F_{W}(\mbf{p})}{\partial \mbf{p}} = \begin{bmatrix}
     \frac{\partial \mathcal{D}}{\partial \mbf{p}}\\ \\
    \frac{\partial \mbf{A}}{\partial \mbf{p}}
  \end{bmatrix} \in 
\mathbb{R}^{(m+q) \times 2n},
\end{equation}
where $\mathcal{D} = [\norm{\mbf{z}_{1}}^2,\norm{\mbf{z}_{2}}^2, ... ,\norm{\mbf{z}_{m}}^2]^\top \in \mathbb{R}^{m}$ and $\mbf{A} = [A_1,A_2,...,A_q]^\top \in \mathbb{R}^{q}$.
Denote $\delta\mbf{p}$ as a variation of the configuration $\mbf{p}$. If $R_W(\mbf{p})\delta\mbf{p}=0$, then $\delta\mbf{p}$ is called an infinitesimal weak motion of $(\mathcal{G},\mbf{p})$. This concept is similar to infinitesimal motions in distance-based rigidity and bearing-based rigidity. Distance preserving motions based on distance rigidity include rigid-body translations and rotations, and bearing preserving motions based on bearing rigidity include rigid-body translations and scalings. On the other hand, the infinitesimal weak motions include not only translations and rotations but also scalings. Figures \ref{inf_ex01} -- \ref{inf_ex05} show that the infinitesimal weak motions include translations and rotations, and Fig. \ref{inf_ex06} shows that the motions include a scaling as well as translations and rotations. 
\begin{definition}[Trivial infinitesimal weak motion] \label{trivial}
An infinitesimal weak motion is called \myemph{trivial} if it corresponds to a translation or a rotation (or a scaling in case of $\mathcal{E} = \emptyset$, for example, see Fig. \ref{inf_ex06}) of the entire framework.
\end{definition}
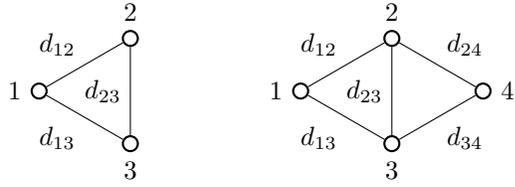
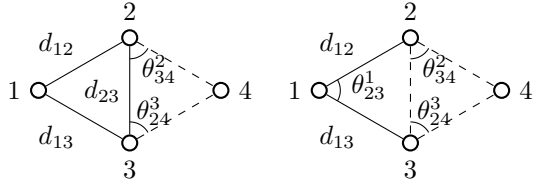
\begin{figure}[h]
\centering
\subfigure[Triangular formation]{ \label{tri_rigid}
\begin{tikzpicture}[scale=0.7]
\node[place] (node1) at (-1.732,0) [label=left:$1$] {};
\node[place] (node2) at (0,1) [label=above:$2$] {};
\node[place] (node3) at (0,-1) [label=below:$3$] {};

\draw[lineUD] (node1)  -- node [above left] {$d_{12}$} (node2);
\draw[lineUD] (node1)  -- node [below left] {$d_{13}$} (node3);
\draw[lineUD] (node2)  -- node [left] {$d_{23}$} (node3);
\end{tikzpicture}\qquad
}\qquad 
\subfigure[Minimally rigid formation]{ \label{rigid_diamond}
\begin{tikzpicture}[scale=0.7]
\node[place] (node1) at (-1.732,0) [label=left:$1$] {};
\node[place] (node2) at (0,1) [label=above:$2$] {};
\node[place] (node3) at (0,-1) [label=below:$3$] {};
\node[place] (node4) at (1.732,0) [label=right:$4$] {};

\draw[lineUD] (node1)  -- node [above left] {$d_{12}$} (node2);
\draw[lineUD] (node1)  -- node [below left] {$d_{13}$} (node3);
\draw[lineUD] (node2)  -- node [above right] {$d_{24}$} (node4);
\draw[lineUD] (node2)  -- node [left] {$d_{23}$} (node3);
\draw[lineUD] (node3)  -- node [below right] {$d_{34}$} (node4);

\end{tikzpicture}
}

\subfigure[Weakly rigid 0-extension]{ \label{0-extension}
\begin{tikzpicture}[scale=0.7]
\node[place] (node1) at (-1.732,0) [label=left:$1$] {};
\node[place] (node2) at (0,1) [label=above:$2$] {};
\node[place] (node3) at (0,-1) [label=below:$3$] {};
\node[place] (node4) at (1.732,0) [label=right:$4$] {};

\draw[lineUD] (node1)  -- node [above left] {$d_{12}$} (node2);
\draw[lineUD] (node1)  -- node [below left] {$d_{13}$} (node3);
\draw[dashed] (node2)  -- (node4);
\draw[lineUD] (node2)  -- node [left] {$d_{23}$} (node3);
\draw[dashed] (node3)  -- (node4);

\draw [black, domain=270:330] plot ({0.4*cos(\x)}, {0.4*sin(\x) + 1}) node at (0.05,0.4) [right] {$\theta_{34}^{2}$};
\draw [black, domain=30:90] plot ({0.4*cos(\x)}, {0.4*sin(\x) - 1}) node at (-0.05,-0.4) [right] {$\theta_{24}^{3}$};
\end{tikzpicture}
}
\subfigure[Weakly rigid 1-extension]{ \label{1-extension}
\begin{tikzpicture}[scale=0.7]
\node[place] (node1) at (-1.732,0) [label=left:$1$] {};
\node[place] (node2) at (0,1) [label=above:$2$] {};
\node[place] (node3) at (0,-1) [label=below:$3$] {};
\node[place] (node4) at (1.732,0) [label=right:$4$] {};

\draw[lineUD] (node1)  -- node [above left] {$d_{12}$} (node2);
\draw[dashed] (node2)  -- (node3);
\draw[dashed] (node2)  -- (node4);
\draw[lineUD] (node1)  -- node [below left] {$d_{13}$} (node3);
\draw[dashed] (node3)  -- (node4);

\draw [black, domain=330:390] plot ({0.4*cos(\x) - 1.732}, {0.4*sin(\x)}) node at (-1.3,0.1) [right] {$\theta_{23}^{1}$};
\draw [black, domain=270:330] plot ({0.4*cos(\x)}, {0.4*sin(\x) + 1}) node at (0.05,0.4) [right] {$\theta_{34}^{2}$};
\draw [black, domain=30:90] plot ({0.4*cos(\x)}, {0.4*sin(\x) - 1}) node at (-0.05,-0.4) [right] {$\theta_{24}^{3}$};
\end{tikzpicture}
}
\caption{Modified Henneberg constructions.} 
\end{figure}

\subsection{Infinitesimal Weak Rigidity} 

\begin{definition}[Infinitesimal Weak Rigidity] \label{weak_rigidity_trivial}
A given framework $(\mathcal{G},\mbf{p})$ is \myemph{infinitesimally weakly rigid} in $\mathbb{R}^{2}$ if all the infinitesimal weak motions are trivial.
\end{definition}

Consider a graph $\mathcal{G'} = (\mathcal{V'},\mathcal{E'},\mathcal{A'})$ induced from $\mathcal{G}$ in such a way that:
\begin{itemize}
\item $\mathcal{V'} = \mathcal{V}$,

\item $\mathcal{E'} = \\
\set{(i,j) \given (i,j) \in \mathcal{E} \lor \exists k \in \mathcal{V}\text{ s.t. } (k,i,j) \in \mathcal{A}} \\
\cup \set{(i,k) \given \exists k \in \mathcal{V}\text{ s.t. } (k,i,j) \in \mathcal{A}} \\
  \cup \set{(j,k) \given \exists k \in \mathcal{V}\text{ s.t. } (k,i,j) \in \mathcal{A}}$\\ 
(if $\mathcal{E} = \emptyset$, then $\mathcal{E'} = \emptyset$),

\item $\mathcal{A'} = \mathcal{A}$.
\end{itemize}
For any edge $(i,j) \in \mathcal{E'}$, we consider a new associated relative position vector defined as  
$$
\mbf{z'}_{s} \triangleq \mbf{z'}_{ij}, \forall s\in \{ 1,..., l\}, l \geq m,
$$
where $\mbf{z'}_{ij} = \mbf{p}_{i} - \mbf{p}_{j}$ for all $(i,j)\in \mathcal{E'}$ and $l=\card{\mathcal{E'}}$.
The new associated relative position vector satisfies the following condition:
$$
\mbf{z'}_{u} = \mbf{z}_{u}, \forall u\in \{ 1,..., m\}.
$$ 
Let $\mbf{z}' = \big[\mbf{z'}_{1}^\top, \mbf{z'}_{2}^\top,...,\mbf{z'}_{l}^\top \big]^\top \in \mathbb{R}^{2l}$ denote a new associated column vector composed of relative position vectors.
The oriented incidence matrix $H' \in \mathbb{R}^{l \times n}$ of the new graph $\mathcal{G'}$ is the $\{0, \pm1\}$-matrix with rows indexed by edges and columns indexed by vertices as follows: 
$$
[H']_{ui}=\begin{cases}
    1 & \text{if the $u$-th edge sinks at vertex $i$}\,, \\
    -1 & \text{if the $u$-th edge leaves vertex $i$}\,, \\
	0 & \text{otherwise}\,,
  \end{cases}
$$
where $[H']_{ui}$ is an element at row $u$ and column $i$ of the matrix $H'$. Note that $\mbf{z'}$ satisfies $\mbf{z'}=\bar{H'}\mbf{p}$ where $\bar{H'}\triangleq H'\otimes I_2$.

We first prove a useful expression which will be used later in Lemma \ref{lem_null_of_rigid matrix}. 
\begin{lemma}\label{partial_deriv}
Let $\mbf{z'}_a$, $\mbf{z'}_b$ and $\mbf{z'}_c$ denote relative position vectors to define a cosine $A_{h}$ s.t. $\frac{\norm{\mbf{z'}_{a}}^{2} + \norm{\mbf{z'}_{b}}^{2} - \norm{\mbf{z'}_{c}}^{2}}{2\norm{\mbf{z'}_{a}}\norm{\mbf{z'}_{b}}}$. The following equations hold.
\begin{align}
\frac{\partial A_h}{\partial \mbf{z'}_a}\mbf{z'}_{a} &=\frac{\norm{\mbf{z'}_{a}}^{2} - \norm{\mbf{z'}_{b}}^{2} + \norm{\mbf{z'}_{c}}^{2}}{2\norm{\mbf{z'}_{a}}\norm{\mbf{z'}_{b}}}, \\
\frac{\partial A_h}{\partial \mbf{z'}_b}\mbf{z'}_{b} &=\frac{-\norm{\mbf{z'}_{a}}^{2} + \norm{\mbf{z'}_{b}}^{2} + \norm{\mbf{z'}_{c}}^{2}}{2\norm{\mbf{z'}_{a}}\norm{\mbf{z'}_{b}}}, \\
\frac{\partial A_h}{\partial \mbf{z'}_c}\mbf{z'}_{c} &= -\frac{\norm{\mbf{z'}_{c}}^{2}}{\norm{\mbf{z'}_{a}}\norm{\mbf{z'}_{b}}},
\end{align}
where $a\neq b\neq c$ and $a,b,c \in \{ 1,..., l\}$.
\end{lemma}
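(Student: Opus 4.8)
The plan is to recognize that each of the three claimed identities is nothing more than a directional derivative of $A_h$ taken along the very vector with respect to which we differentiate, and that this directional derivative is cleanest to compute by exploiting how $A_h$ scales in that vector. Concretely, by the chain rule the quantity $\frac{\partial A_h}{\partial \mbf{z'}_a}\mbf{z'}_a$ equals $\frac{d}{dt}A_h(t\mbf{z'}_a,\mbf{z'}_b,\mbf{z'}_c)\big|_{t=1}$, since scaling $\mbf{z'}_a$ by $t$ perturbs the configuration precisely in the direction $\mbf{z'}_a$ while holding $\mbf{z'}_b,\mbf{z'}_c$ fixed. Because $\norm{t\mbf{z'}_a}=t\norm{\mbf{z'}_a}$ for $t>0$, substituting $t\mbf{z'}_a$ into the cosine formula replaces $\norm{\mbf{z'}_a}$ by $t\norm{\mbf{z'}_a}$ and leaves $\norm{\mbf{z'}_b}$ and $\norm{\mbf{z'}_c}$ untouched, so $A_h$ becomes an explicit single-variable rational function of $t$.

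First I would carry out this reduction for $\mbf{z'}_a$. Writing $\alpha=\norm{\mbf{z'}_a}$, $\beta=\norm{\mbf{z'}_b}$, $\gamma=\norm{\mbf{z'}_c}$, the scaled cosine becomes $(t^2\alpha^2+\beta^2-\gamma^2)/(2t\alpha\beta)$; differentiating in $t$ by the quotient rule and evaluating at $t=1$ collapses, after cancelling the common factor $2\alpha\beta$, to $(\alpha^2-\beta^2+\gamma^2)/(2\alpha\beta)$, which is exactly the right-hand side of the first identity. The second identity then follows immediately from the invariance of the defining expression under interchanging $\mbf{z'}_a$ and $\mbf{z'}_b$, so no separate computation is needed. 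For the third identity I would scale $\mbf{z'}_c$ instead: the denominator $2\alpha\beta$ is independent of $\mbf{z'}_c$, so only the numerator term $-\gamma^2$ is affected, giving $(\alpha^2+\beta^2-t^2\gamma^2)/(2\alpha\beta)$, whose $t$-derivative at $t=1$ is $-\gamma^2/(\alpha\beta)$, the claimed value.

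An equivalent route, if one prefers to work directly with gradients rather than the scaling trick, is to apply the quotient rule to $A_h$ as a function of $\mbf{p}$ using $\frac{\partial \norm{\mbf{z'}_a}^2}{\partial \mbf{z'}_a}=2\mbf{z'}_a^\top$ and $\frac{\partial \norm{\mbf{z'}_a}}{\partial \mbf{z'}_a}=\mbf{z'}_a^\top/\norm{\mbf{z'}_a}$, and then right-multiply the resulting row vector by $\mbf{z'}_a$, invoking $\mbf{z'}_a^\top\mbf{z'}_a=\norm{\mbf{z'}_a}^2$ to eliminate every remaining vector quantity. Both routes reduce the statement to a short scalar algebra check, so I do not expect a genuine obstacle here; the only point demanding care is the bookkeeping in the quotient-rule step and the consistent use of the substitution $\mbf{z'}_\bullet^\top\mbf{z'}_\bullet=\norm{\mbf{z'}_\bullet}^2$, which is exactly what makes every leftover gradient contract neatly into the claimed scalar.
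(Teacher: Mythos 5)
Your proof is correct, and your primary route is genuinely different from the paper's. The paper proves the lemma by brute force: it writes out $\frac{\partial A_h}{\partial \mbf{z'}_a}$ explicitly via the quotient rule as a row vector, right-multiplies by $\mbf{z'}_a$, and contracts using $\mbf{z'}_a^\top\mbf{z'}_a=\norm{\mbf{z'}_a}^2$ (and similarly for $\mbf{z'}_c$, with $\mbf{z'}_b$ left to the reader) --- this is precisely the ``equivalent route'' you sketch in your final paragraph. Your main argument instead observes that each contraction $\frac{\partial A_h}{\partial \mbf{z'}_\bullet}\mbf{z'}_\bullet$ is the derivative at $t=1$ of $A_h$ under the scaling $\mbf{z'}_\bullet \mapsto t\mbf{z'}_\bullet$ with the other two variables held fixed; since $A_h$ depends on $\mbf{z'}_\bullet$ only through its norm, this reduces everything to differentiating a scalar rational function of $t$, and the $a\leftrightarrow b$ symmetry of $A_h$ disposes of the second identity for free. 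This Euler-type homogeneity argument is cleaner and less error-prone than the paper's vector quotient-rule bookkeeping, at the cost of requiring the (correct) observation that the partial derivatives here treat $\mbf{z'}_a,\mbf{z'}_b,\mbf{z'}_c$ as formally independent variables, which you state explicitly. Both routes yield the same three scalar identities, and your verification of the quotient-rule computations at $t=1$ matches the claimed right-hand sides.
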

\begin{proof}
\begin{figure}[]
\centering
{
\begin{tikzpicture}[scale=0.5]
\node[place] (node1) at (-2,0) [label=left:$i$] {};
\node[place] (node2) at (0,1) [label=above:$k$] {};
\node[place] (node3) at (2,0) [label=right:$j$] {};

\draw[lineUD] (node1)  -- node [above left] {$\mbf{z'}_a$} (node2);
\draw[lineUD] (node1)  -- node [below] {$\mbf{z'}_c$} (node3);
\draw[lineUD] (node2)  -- node [above right] {$\mbf{z'}_b$} (node3);
\end{tikzpicture} \caption{Example of a triangle} \label{Fig:triangle_ijk}
}
\end{figure}
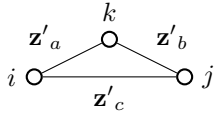 \\
Since $\cos{\theta_{ij}^{k}}
= \left[\frac{\norm{\mbf{z'}_{ik}}^{2} + \norm{\mbf{z'}_{jk}}^{2} - \norm{\mbf{z'}_{ij}}^{2}}{2\norm{\mbf{z'}_{ik}}\norm{\mbf{z'}_{jk}}}\right]$ and $(k,i,j) \in \mathcal{A'}$, with reference to Fig. \ref{Fig:triangle_ijk}, $A_h$ can be expressed as
\begin{align} 
A_{h} = \cos{\theta_{ij}^{k}}
= \frac{\norm{\mbf{z'}_{a}}^{2} + \norm{\mbf{z'}_{b}}^{2} - \norm{\mbf{z'}_{c}}^{2}}{2\norm{\mbf{z'}_{a}}\norm{\mbf{z'}_{b}}}, \nonumber \\ 
\forall a,b,c \in \{ 1,..., l\}, a\neq b\neq c. \nonumber
\end{align}
As a result, the following equations are calculated as
\begin{align} 
\frac{\partial A_h}{\partial \mbf{z'}_a}\mbf{z'}_{a} &= \frac{1}{4\norm{\mbf{z'}_a}^2\norm{\mbf{z'}_b}^2} \bigg[2{\mbf{z'}_a^\top}(2\norm{\mbf{z'}_a}\norm{\mbf{z'}_b})- \nonumber \\ 
&(\norm{\mbf{z'}_a}^2+\norm{\mbf{z'}_b}^2-\norm{\mbf{z'}_c}^2)(2\frac{{\mbf{z'}_a^\top}}{\norm{\mbf{z'}_a}}\norm{\mbf{z'}_b})\bigg]\mbf{z'}_a \nonumber \\ 
&=\frac{\norm{\mbf{z'}_{a}}^{2} - \norm{\mbf{z'}_{b}}^{2} + \norm{\mbf{z'}_{c}}^{2}}{2\norm{\mbf{z'}_{a}}\norm{\mbf{z'}_{b}}},\nonumber \\
\frac{\partial A_h}{\partial \mbf{z'}_c}\mbf{z'}_{c} &= \frac{1}{4\norm{\mbf{z'}_a}^2\norm{\mbf{z'}_b}^2} \big[ -2{\mbf{z'}_c^\top}(2\norm{\mbf{z'}_a}\norm{\mbf{z'}_b})\big]\mbf{z'}_c \nonumber \\ 
&= -\frac{\norm{\mbf{z'}_{c}}^{2}}{\norm{\mbf{z'}_{a}}\norm{\mbf{z'}_{b}}}, \nonumber
\end{align} 
where ${\mbf{z'}_a^\top}\mbf{z'}_a = \norm{\mbf{z'}_a}^2$ and ${\mbf{z'}_c^\top}\mbf{z'}_c = \norm{\mbf{z'}_c}^2$. $\frac{\partial A_h}{\partial \mbf{z'}_b}\mbf{z'}_{b}$ can be also calculated similarly.
\end{proof}

\begin{lemma}\label{Lem:linearly_independence}
If $\mbf{p} \neq 0$, the vectors in the set $L_i=\{\mathds{1}\otimes I_2, (I_n\otimes J)\mbf{p}, \mbf{p} \}$ are linearly independent.
\end{lemma}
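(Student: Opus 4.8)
The plan is to reduce the claim to a small per-agent linear system and exploit the Kronecker structure. First I would suppose a vanishing linear combination
\[
(\mathds{1}\otimes I_2)\mbf{c} + \beta\,(I_n\otimes J)\mbf{p} + \gamma\,\mbf{p} = 0,
\]
where $\mbf{c}=[c_1,c_2]^\top \in \mathbb{R}^2$ collects the coefficients of the two columns of $\mathds{1}\otimes I_2$ (the two independent translations) and $\beta,\gamma\in\mathbb{R}$ are the coefficients of the rotation vector $(I_n\otimes J)\mbf{p}$ and the scaling vector $\mbf{p}$. My goal is to show $\mbf{c}=0$ and $\beta=\gamma=0$. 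Reading off the $2$-dimensional block indexed by agent $i$ and using $(\mathds{1}\otimes I_2)\mbf{c} = \mathds{1}\otimes\mbf{c}$, the equation decouples into $\mbf{c} + (\gamma I_2 + \beta J)\mbf{p}_i = 0$ for every $i\in\mathcal{V}$.

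The key observation is that $M \defeq \gamma I_2 + \beta J = \left[\begin{smallmatrix}\gamma & -\beta\\ \beta & \gamma\end{smallmatrix}\right]$ is a scaled rotation with $\det M = \gamma^2 + \beta^2$. Subtracting the block equation for agent $j$ from that for agent $i$ eliminates $\mbf{c}$ and yields $M(\mbf{p}_i - \mbf{p}_j) = 0$ for all $i,j\in\mathcal{V}$. By the standing assumption that the position vectors are not all collocated, there is a pair $i,j$ with $\mbf{p}_i\neq\mbf{p}_j$, so $M$ has a nontrivial kernel and therefore $\det M = \gamma^2+\beta^2 = 0$; over the reals this forces $\beta=\gamma=0$. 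Substituting back into any single block equation gives $\mbf{c}=0$, and hence all four coefficients vanish, which proves linear independence.

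The step to be careful about is the exact role of the hypotheses, and this is what I would flag as the crux. Note that $\mbf{p}\neq 0$ by itself is not enough: if every agent sat at a common nonzero point $\mbf{p}_0$, then $\mbf{p}=\mathds{1}\otimes\mbf{p}_0$ would lie in the column span of $\mathds{1}\otimes I_2$ and the set would be dependent. What the argument genuinely needs is the existence of two distinct positions, supplied by the non-collocation assumption; this is precisely what makes $M$ singular. A pleasant feature is that no genericity or non-collinearity of the configuration is required: the determinant argument forces $\beta=\gamma=0$ from a single mismatched pair, so the conclusion persists even when all agents happen to be collinear.
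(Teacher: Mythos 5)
Your proof is correct, and it is more explicit than the one in the paper. Both arguments start from the same vanishing linear combination of the four vectors (the two columns of $\mathds{1}\otimes I_2$, the rotation vector $(I_n\otimes J)\mbf{p}$, and $\mbf{p}$), but the paper then simply asserts that row-reducing the augmented matrix, ``by the assumptions that $\mbf{p}\neq 0$ and there are no position vectors collocated at one point,'' yields a pivot in every column; no computation is shown. You instead decompose the equation into per-agent blocks $\mbf{c} + (\gamma I_2 + \beta J)\mbf{p}_i = 0$, subtract two blocks to get $M(\mbf{p}_i-\mbf{p}_j)=0$ with $M=\gamma I_2+\beta J$, and use $\det M=\gamma^2+\beta^2$ together with the existence of a pair $\mbf{p}_i\neq\mbf{p}_j$ to force $\beta=\gamma=0$ and then $\mbf{c}=0$. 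This buys a verifiable argument where the paper has only an assertion, and it isolates exactly which hypothesis does the work. Your side remark is also on target: the stated hypothesis $\mbf{p}\neq 0$ is not sufficient by itself (a configuration with all agents at a common nonzero point gives $\mbf{p}=\mathds{1}\otimes\mbf{p}_0$ in the column span of $\mathds{1}\otimes I_2$); the paper's proof quietly leans on the standing non-collocation assumption (iii), precisely as you identified, and your argument makes that dependence explicit while showing that no non-collinearity or genericity is needed.
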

\begin{proof}
Let $\{a_1,a_2,a_3,a_4\}$, where $a_1, a_2, a_3, a_4 \in \mathbb{R}^{2n}$, be defined as $\{\mathds{1}\otimes I_2, (I_n\otimes J)\mbf{p}, \mbf{p} \}$. Then, we can set the following equation to determine the linear independence.
\begin{equation}
k_1a_1+k_2a_2+k_3a_3+k_4a_4 = 0, \label{eq:linearly_independence}
\end{equation}
where $k_1, k_2, k_3$ and $k_4$ are scalars. By row-reducing the augmented matrix of equation (\ref{eq:linearly_independence}) and the assumptions that $\mbf{p} \neq 0$ and there are no position vectors collocated at one point, the matrix can be transformed to the reduced row echelon form as follows
$$
\begin{bmatrix}
1 &0 &0 &0 &0 &\cdots &0\\
0 &1 &0 &0 &0 &\cdots &0\\
0 &0 &1 &0 &0 &\cdots &0\\
0 &0 &0 &1 &0 &\cdots &0
\end{bmatrix}^\top.
$$
From the above result, we know that the solution, $k_1= k_2= k_3= k_4= 0$, of equation (\ref{eq:linearly_independence}) is unique. Thus, by the definition of the linearly independence, we can see that the vectors in the set $\{\mathds{1}\otimes I_2, (I_n\otimes J)\mbf{p}, \mbf{p} \}$ are linearly independent.
\end{proof}

\begin{lemma}\label{lem_null_of_rigid matrix}
A framework $(\mathcal{G},\mbf{p})$ in $\mathbb{R}^{2}$ satisfies span$\{\mathds{1}\otimes I_2, (I_n\otimes J)\mbf{p} \} \subseteq$ Null$(R_{W}(\mbf{p}))$ and $\rank(R_{W}(\mbf{p}))\leq 2n-3$ if $\mathcal{E} \neq \emptyset$. If $\mathcal{E} = \emptyset$, then the framework $(\mathcal{G},\mbf{p})$ in $\mathbb{R}^{2}$ satisfies span$\{\mathds{1}\otimes I_2, (I_n\otimes J)\mbf{p}, \mbf{p} \}\subseteq$ Null$(R_{W}(\mbf{p}))$ and $\rank(R_{W}(\mbf{p}))\leq 2n-4$.
\end{lemma}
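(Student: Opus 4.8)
The plan is to evaluate the product $R_W(\mbf{p})\delta\mbf{p}$ on each of the three candidate directions and show it vanishes, with the scaling direction distinguished by whether distance rows are present. The whole computation is organized around the identity $\delta\mbf{z'}=\bar{H'}\delta\mbf{p}$, so that a variation of $\mbf{p}$ translates into a variation of every relative position vector; since the distance rows are functions of the $\mbf{z}_g$ and, by Lemma~\ref{partial_deriv}, the cosine rows $A_h$ depend on the configuration only through the three norms $\norm{\mbf{z'}_a},\norm{\mbf{z'}_b},\norm{\mbf{z'}_c}$, each gradient block reduces to expressions that are linear in the edge vectors. The key structural observation I would stress is that every row of $R_W(\mbf{p})$, distance or angle, acts on $\delta\mbf{z'}_s$ through an inner product with a vector parallel to $\mbf{z'}_s$: for distance rows this is immediate from $\partial\norm{\mbf{z}_g}^2/\partial\mbf{z}_g = 2\mbf{z}_g^\top$, and for angle rows it follows because $A_h$ is a function of norms alone, so $\partial A_h/\partial\mbf{z'}_a$ is a scalar multiple of $\mbf{z'}_a^\top$.

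For the translation block $\mathds{1}\otimes I_2$, I would note that $H'\mathds{1}=0$ (each edge row has one $+1$ and one $-1$), hence $\bar{H'}(\mathds{1}\otimes\mbf{c})=(H'\mathds{1})\otimes\mbf{c}=0$; every relative position vector is unchanged, so all rows vanish. For the rotation direction $(I_n\otimes J)\mbf{p}$, the mixed-product rule gives $\delta\mbf{z'}=\bar{H'}(I_n\otimes J)\mbf{p}=(I_l\otimes J)\mbf{z'}$, i.e. $\delta\mbf{z'}_s=J\mbf{z'}_s$. Combining this with the parallelism observation above, each row contributes a term of the form $\mbf{z'}_s^\top J\mbf{z'}_s$, which is zero because $J$ is skew-symmetric. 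This establishes $\vspan\{\mathds{1}\otimes I_2,(I_n\otimes J)\mbf{p}\}\subseteq\mathrm{Null}(R_W(\mbf{p}))$ in both cases.

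For the scaling direction $\mbf{p}$, the same substitution yields $\delta\mbf{z'}=\bar{H'}\mbf{p}=\mbf{z'}$, so $\delta\mbf{z'}_s=\mbf{z'}_s$. On a distance row this produces $2\mbf{z}_g^\top\mbf{z}_g=2\norm{\mbf{z}_g}^2>0$, which is nonzero, so $\mbf{p}$ cannot lie in the null space whenever $\mathcal{E}\neq\emptyset$. On an angle row, applying Lemma~\ref{partial_deriv} to the three summands $\partial A_h/\partial\mbf{z'}_a\,\mbf{z'}_a+\partial A_h/\partial\mbf{z'}_b\,\mbf{z'}_b+\partial A_h/\partial\mbf{z'}_c\,\mbf{z'}_c$ makes the numerators telescope to zero. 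Hence when $\mathcal{E}=\emptyset$ there are no distance rows and $\mbf{p}$ is annihilated by every (angle) row, giving $\vspan\{\mathds{1}\otimes I_2,(I_n\otimes J)\mbf{p},\mbf{p}\}\subseteq\mathrm{Null}(R_W(\mbf{p}))$.

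The rank bounds then follow from rank--nullity together with Lemma~\ref{Lem:linearly_independence}: if $\mathcal{E}\neq\emptyset$ the three columns in $\{\mathds{1}\otimes I_2,(I_n\otimes J)\mbf{p}\}$ are linearly independent, so $\dim\mathrm{Null}(R_W(\mbf{p}))\geq 3$ and $\rank(R_W(\mbf{p}))\leq 2n-3$; if $\mathcal{E}=\emptyset$ all four columns are independent, giving $\dim\mathrm{Null}(R_W(\mbf{p}))\geq 4$ and $\rank(R_W(\mbf{p}))\leq 2n-4$. I expect the only delicate step to be the angle rows: one must be careful that the cosine is expressed via the extended graph $\mathcal{G'}$ (so that the three edges $\mbf{z'}_a,\mbf{z'}_b,\mbf{z'}_c$ of each triangle are genuinely present) and that the chain rule through $\mbf{z'}=\bar{H'}\mbf{p}$ is applied consistently; once the ``gradient parallel to the edge vector'' observation is in hand, both the rotation cancellation and the scaling telescoping supplied by Lemma~\ref{partial_deriv} become routine.
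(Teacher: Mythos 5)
Your proposal is correct and follows essentially the same route as the paper's proof: factoring $R_W=\bigl[\partial\mathcal{D}/\partial\mbf{z'};\,\partial\mbf{A}/\partial\mbf{z'}\bigr]\bar{H'}$, killing the translation via $H'\mathds{1}=0$, the rotation via the skew-symmetry identity $\mbf{z'}_s{}^\top J\mbf{z'}_s=0$ (which works precisely because each gradient block is parallel to the corresponding edge vector), the scaling via the telescoping sum from Lemma~\ref{partial_deriv}, and then invoking Lemma~\ref{Lem:linearly_independence} for the rank bounds. Your explicit statement that every row acts through a vector parallel to $\mbf{z'}_s$ is a slightly cleaner packaging of what the paper uses implicitly, but it is the same argument.
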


\begin{proof}
If $\mathcal{E} \neq \emptyset$ (for example, from Fig. \ref{inf_ex01} to Fig. \ref{inf_ex05}), then the equation (\ref{Weak_rigidity_matrix}) can be expressed as follows
$$
R_{W}(\mbf{p}) = \frac{\partial F_{W}(\mbf{p})}{\partial \mbf{p}}=  \begin{bmatrix}
     \frac{\partial \mathcal{D}}{\partial \mbf{z'}} \frac{\partial \mbf{z'}}{\partial \mbf{p}}\\ \\
    \frac{\partial \mbf{A}}{\partial \mbf{z'}} \frac{\partial \mbf{z'}}{\partial \mbf{p}}
  \end{bmatrix} = \begin{bmatrix}
     \frac{\partial \mathcal{D}}{\partial \mbf{z'}} \bar{H'} \\ \\
    \frac{\partial \mbf{A}}{\partial \mbf{z'}} \bar{H'}
  \end{bmatrix} = \begin{bmatrix}
    \frac{\partial \mathcal{D}}{\partial \mbf{z'}}\\ \\
    \frac{\partial \mbf{A}}{\partial \mbf{z'}}
  \end{bmatrix}  \bar{H'},
$$
where $\mbf{A} = [A_1,A_2,...,A_q]^\top \in \mathbb{R}^{q}$. First, it is clear that span$\{\mathds{1}\otimes I_2\}$ $\subseteq$ Null$(\bar{H'})$ $\subseteq$ Null$(R_{W}(\mbf{p}))$. 
Second, $\bar{H'}(I_n\otimes J)\mbf{p}$ can be expressed as
\begin{align}
\bar{H'}(I_n\otimes J)\mbf{p} &= (H'\otimes I_2)(I_n\otimes J)\mbf{p} =(H'\otimes J)\mbf{p} \nonumber \\
&= (I_l H')\otimes(J I_2)\mbf{p} = (I_l \otimes J)(H'\otimes I_2)\mbf{p} \nonumber \\
&=(I_l \otimes J)\mbf{z'}= \begin{bmatrix}
J\mbf{z'}_1 \\
\vdots\\
J\mbf{z'}_l
\end{bmatrix}. \nonumber
\end{align}
Let $A_h$ be an element of vector $\mbf{A}$ for $h\in \{ 1,..., q\}$ as mentioned in Lemma \ref{partial_deriv}. Then, the elements of $\frac{\partial A_h}{\partial \mbf{z'}}$ are zero except for $\frac{\partial A_h}{\partial \mbf{z'}_a}$, $\frac{\partial A_h}{\partial \mbf{z'}_b}$ and $\frac{\partial A_h}{\partial \mbf{z'}_c}$, where $a,b,c \in \{ 1,..., l\}$. With reference to the calculation of $\frac{\partial A_h}{\partial \mbf{z'}_a}$ in Lemma \ref{partial_deriv},
$\frac{\partial A_h}{\partial \mbf{z'}}\bar{H'}(I_n\otimes J)\mbf{p}$ is calculated as
\begin{align}
\frac{\partial A_h}{\partial \mbf{z'}}\bar{H'}(I_n\otimes J)\mbf{p}=\frac{\partial A_h}{\partial \mbf{z'}}(I_l \otimes J)\mbf{z'} = \frac{\partial A_h}{\partial \mbf{z'}}\begin{bmatrix}
J\mbf{z'}_1 \\
\vdots\\
J\mbf{z'}_l
\end{bmatrix} \nonumber \\
=\frac{\partial A_h}{\partial \mbf{z'}_a}J\mbf{z'}_{a}+\frac{\partial A_h}{\partial \mbf{z'}_b}J\mbf{z'}_{b}+\frac{\partial A_h}{\partial \mbf{z'}_c}J\mbf{z'}_{c} = 0, \nonumber
\end{align}
where ${\mbf{z'}_a}^\top J \mbf{z'}_a = 0$, ${\mbf{z'}_b}^\top J \mbf{z'}_b = 0$ and ${\mbf{z'}_c}^\top J \mbf{z'}_c = 0$. Thus, $\frac{\partial \mbf{A}}{\partial \mbf{z'}}\bar{H'}(I_n\otimes J)\mbf{p} = 0$. Also, the following equation is calculated as
\begin{align}
\frac{\partial \mathcal{D}}{\partial \mbf{z'}}\bar{H'}(I_n\otimes J)\mbf{p} 
&= \frac{\partial \mathcal{D}}{\partial \mbf{z'}}(I_l \otimes J)\mbf{z'} 
= \frac{\partial \mathcal{D}}{\partial \mbf{z'}}\begin{bmatrix}
J\mbf{z'}_1 \\
\vdots\\
J\mbf{z'}_l
\end{bmatrix} \nonumber \\
&= \begin{bmatrix} 
2D^\top & 0_{m,(2l-2m)}
\end{bmatrix} \begin{bmatrix}
J\mbf{z'}_1 \\
\vdots\\
J\mbf{z'}_l
\end{bmatrix} =0, \nonumber
\end{align} 
where $D=$diag$(\mbf{z'}_1,...,\mbf{z'}_m) \in \mathbb{R}^{2m \times m}$ and $0_{m,(2l-2m)}$ is a $m \times (2l-2m)$ zero matrix.
Using the above results, the following equation can be calculated as
\begin{align}
R_{W}(\mbf{p})(I_n\otimes J)\mbf{p} &= \begin{bmatrix}
    \frac{\partial \mathcal{D}}{\partial \mbf{z'}}\\ \\
    \frac{\partial \mbf{A}}{\partial \mbf{z'}}
  \end{bmatrix} \bar{H'} (I_n\otimes J)\mbf{p} \nonumber \\
&=\begin{bmatrix}
    \frac{\partial \mathcal{D}}{\partial \mbf{z'}} \\ \\
    \frac{\partial \mbf{A}}{\partial \mbf{z'}}
  \end{bmatrix} \begin{bmatrix}
J\mbf{z'}_1 \\
\vdots\\
J\mbf{z'}_l
\end{bmatrix} = 0. \nonumber 
\end{align}
Therefore, we have span$\{(I_n\otimes J)\mbf{p} \} \subseteq$ Null$(R_{W}(\mbf{p}))$. Also, with span$\{\mathds{1}\otimes I_2\}\subseteq$ Null$(R_{W}(\mbf{p}))$ and Lemma \ref{Lem:linearly_independence}, span$\{\mathds{1}\otimes I_2, (I_n\otimes J)\mbf{p} \} \subseteq$ Null$(R_{W}(\mbf{p}))$ holds and the inequality $\rank(R_{W}(\mbf{p}))\leq 2n-3$ is expressed from span$\{\mathds{1}\otimes I_2, (I_n\otimes J)\mbf{p} \} \subseteq$ Null$(R_{W}(\mbf{p}))$ directly. 

However, if $\mathcal{E} = \emptyset$ (for example, Fig. \ref{inf_ex06}), then the equation (\ref{Weak_rigidity_matrix}) can be expressed as
$$
R_{W}(\mbf{p}) = \frac{\partial F_{W}(\mbf{p})}{\partial \mbf{p}}=
    \frac{\partial \mbf{A}}{\partial \mbf{z'}} \bar{H'}.
$$
Then, $R_{W}(\mbf{p})\mbf{p} = \frac{\partial \mbf{A}}{\partial \mbf{z'}} \bar{H'}\mbf{p} = \frac{\partial \mbf{A}}{\partial \mbf{z'}}\mbf{z'}$. 
The elements of $\frac{\partial A_h}{\partial \mbf{z'}}$ are zero except for $\frac{\partial A_h}{\partial \mbf{z'}_a}$, $\frac{\partial A_h}{\partial \mbf{z'}_b}$ and $\frac{\partial A_h}{\partial \mbf{z'}_c}$. With Lemma \ref{partial_deriv},
$\frac{\partial A_h}{\partial \mbf{z'}}\mbf{z'}$ is calculated as follows
\begin{align}
\frac{\partial A_h}{\partial \mbf{z'}}\mbf{z'} &= \frac{\partial A_h}{\partial \mbf{z'}}\begin{bmatrix}
\mbf{z'}_1 \\
\vdots\\
\mbf{z'}_l
\end{bmatrix}
=
\frac{\partial A_h}{\partial \mbf{z'}_a}\mbf{z'}_{a}+\frac{\partial A_h}{\partial \mbf{z'}_b}\mbf{z'}_{b}+\frac{\partial A_h}{\partial \mbf{z'}_c}\mbf{z'}_{c} \nonumber \\
&= \frac{\norm{\mbf{z'}_{a}}^{2} - \norm{\mbf{z'}_{b}}^{2} + \norm{\mbf{z'}_{c}}^{2}}{2\norm{\mbf{z'}_{a}}\norm{\mbf{z'}_{b}}} + \nonumber \\
&\frac{-\norm{\mbf{z'}_{a}}^{2} + \norm{\mbf{z'}_{b}}^{2} + \norm{\mbf{z'}_{c}}^{2}}{2\norm{\mbf{z'}_{a}}\norm{\mbf{z'}_{b}}} +
\frac{-2\norm{\mbf{z'}_{c}}^{2}}{2\norm{\mbf{z'}_{a}}\norm{\mbf{z'}_{b}}} = 0. \nonumber
\end{align}
Therefore, $R_{W}(\mbf{p})\mbf{p}=\frac{\partial \mbf{A}}{\partial \mbf{z'}}\mbf{z'}=0$ and $\mbf{p}\subseteq$ Null$(R_{W}(\mbf{p}))$. Also, we can easily prove that span$\{\mathds{1}\otimes I_2, (I_n\otimes J)\mbf{p} \} \subseteq$ Null$(R_{W}(\mbf{p}))$ as the case of $\mathcal{E} \neq \emptyset$ is proved. With Lemma \ref{Lem:linearly_independence}, we can see that the inequality $\rank(R_{W}(\mbf{p}))\leq 2n-4$ is expressed from span$\{\mathds{1}\otimes I_2, (I_n\otimes J)\mbf{p}, \mbf{p} \} \subseteq$ Null$(R_{W}(\mbf{p}))$ directly.
\end{proof}

The next result gives us the necessary and sufficient condition for infinitesimal weak rigidity of a framework.
\begin{theorem}
\label{Thm:Inf_Rank1}
A framework $(\mathcal{G},\mbf{p})$ with $n \ge 3$ and $\mathcal{E} \neq \emptyset$ is infinitesimally weakly rigid in $\mathbb{R}^{2}$ if and only if the weak rigidity matrix $R_{W}(\mbf{p})$ has rank $2n - 3$.
\end{theorem}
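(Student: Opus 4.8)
The plan is to prove both directions of the equivalence by connecting the abstract definition of infinitesimal weak rigidity (Definition \ref{weak_rigidity_trivial}) to the rank of $R_{W}(\mbf{p})$. Recall that a framework is infinitesimally weakly rigid precisely when every infinitesimal weak motion $\delta\mbf{p} \in \text{Null}(R_{W}(\mbf{p}))$ is trivial, i.e. corresponds to a translation or rotation. By Lemma \ref{lem_null_of_rigid matrix}, when $\mathcal{E} \neq \emptyset$ we already know $\vspan\{\mathds{1}\otimes I_2, (I_n\otimes J)\mbf{p}\} \subseteq \text{Null}(R_{W}(\mbf{p}))$. The first key observation I would establish is that these two trivial motions span a space of dimension exactly $3$: the translations $\mathds{1}\otimes I_2$ contribute a $2$-dimensional subspace, and the infinitesimal rotation $(I_n\otimes J)\mbf{p}$ contributes one more dimension, their mutual linear independence following from Lemma \ref{Lem:linearly_independence}. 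Hence the space of trivial infinitesimal weak motions has dimension $3$.

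With that dimension count in hand, the two directions reduce to relating $\dim\text{Null}(R_{W}(\mbf{p}))$ to this space of trivial motions. First I would prove sufficiency: assume $\rank(R_{W}(\mbf{p})) = 2n-3$. By the rank–nullity theorem, $\dim\text{Null}(R_{W}(\mbf{p})) = 2n - (2n-3) = 3$. Since the $3$-dimensional space of trivial motions is already contained in the null space (Lemma \ref{lem_null_of_rigid matrix}) and both have dimension $3$, the null space equals exactly the trivial motions. Therefore every infinitesimal weak motion is trivial, which is the definition of infinitesimal weak rigidity.

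For necessity, I would argue the contrapositive together with the rank bound. Lemma \ref{lem_null_of_rigid matrix} guarantees $\rank(R_{W}(\mbf{p})) \leq 2n-3$, so the only way the rank can fail to equal $2n-3$ is if $\rank(R_{W}(\mbf{p})) < 2n-3$, forcing $\dim\text{Null}(R_{W}(\mbf{p})) > 3$. In that case the null space strictly contains the $3$-dimensional trivial subspace, so there exists an infinitesimal weak motion $\delta\mbf{p}$ outside $\vspan\{\mathds{1}\otimes I_2, (I_n\otimes J)\mbf{p}\}$, i.e. a nontrivial motion. This violates infinitesimal weak rigidity. Contrapositively, infinitesimal weak rigidity forces $\rank(R_{W}(\mbf{p})) = 2n-3$.

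The main obstacle I anticipate is making the identification between the \emph{abstract} trivial motions of Definition \ref{trivial} (translations and rotations of the entire framework, defined geometrically) and the \emph{concrete} algebraic subspace $\vspan\{\mathds{1}\otimes I_2, (I_n\otimes J)\mbf{p}\}$ fully rigorous, rather than merely asserting it. One must verify that every rigid-body translation is realized infinitesimally by a vector in $\vspan\{\mathds{1}\otimes I_2\}$ and that every infinitesimal rotation about the centroid is a scalar multiple of $(I_n\otimes J)\mbf{p}$, and crucially that no other geometric trivial motion exists in $\mathbb{R}^2$ beyond these $3$ degrees of freedom. Once this correspondence is pinned down, the dimension count and rank–nullity argument close the proof cleanly; the essential analytic content has already been discharged in Lemmas \ref{partial_deriv}--\ref{lem_null_of_rigid matrix}.
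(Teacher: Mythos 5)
Your proposal is correct and follows essentially the same route as the paper: the paper's own proof simply invokes Lemma \ref{lem_null_of_rigid matrix} for the inclusion $\vspan\{\mathds{1}\otimes I_2, (I_n\otimes J)\mbf{p}\} \subseteq \mathrm{Null}(R_W(\mbf{p}))$, identifies these with translations and rotations by citation, and declares the result to follow from Definition \ref{weak_rigidity_trivial}. You have merely made explicit the dimension count and rank--nullity bookkeeping that the paper leaves implicit (and you correctly flag, as the one remaining point to pin down, the identification of the abstract trivial motions with that $3$-dimensional algebraic subspace, which the paper also only asserts by reference).
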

\begin{proof}
Lemma \ref{lem_null_of_rigid matrix} shows span$\{\mathds{1}\otimes I_2, (I_n\otimes J)\mbf{p} \} \subseteq$ Null$(R_{W}(\mbf{p}))$. Observe that $\mathds{1}\otimes I_2$ and $(I_n\otimes J)\mbf{p}$ correspond to a rigid-body translation and a rotation of the framework, respectively, with reference to \cite{zhao2016bearing,sun2017distributed}. Therefore, the theorem directly follows from Definition \ref{weak_rigidity_trivial}.
\end{proof}

However, in case of $\mathcal{E} = \emptyset$, the condition span$\{\mathds{1}\otimes I_2, (I_n\otimes J)\mbf{p}, \mbf{p} \}\subseteq$ Null$(R_{W}(\mbf{p}))$ is satisfied as proved in Lemma \ref{lem_null_of_rigid matrix}. Observe that $\mathds{1}\otimes I_2$, $(I_n\otimes J)\mbf{p}$ and $\mbf{p}$ correspond to a rigid-body translation, a rotation and a scaling of the framework, respectively, with reference to \cite{zhao2016bearing,sun2017distributed}. Therefore, when $\mathcal{E} = \emptyset$, the following theorem follows from Definition \ref{weak_rigidity_trivial} directly.

\begin{theorem}
\label{Thm:Inf_Rank2}
A framework $(\mathcal{G},\mbf{p})$ with $n \ge 3$ and $\mathcal{E} = \emptyset$ is infinitesimally weakly rigid in $\mathbb{R}^{2}$ if and only if the weak rigidity matrix $R_{W}(\mbf{p})$ has rank $2n - 4$.
\end{theorem}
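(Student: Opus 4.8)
The plan is to mirror the proof of Theorem \ref{Thm:Inf_Rank1}, but now accounting for the extra trivial motion (a scaling) that is admissible precisely when $\mathcal{E} = \emptyset$. The key identity is the rank--nullity relation $\dim(\text{Null}(R_{W}(\mbf{p}))) = 2n - \rank(R_{W}(\mbf{p}))$, so the rank condition $\rank(R_{W}(\mbf{p})) = 2n-4$ is equivalent to Null$(R_{W}(\mbf{p}))$ having dimension exactly $4$. The entire argument thus reduces to showing that, in the edgeless case, infinitesimal weak rigidity is equivalent to this four-dimensional null space.

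First I would invoke Lemma \ref{lem_null_of_rigid matrix}, which in the case $\mathcal{E}=\emptyset$ gives the inclusion span$\{\mathds{1}\otimes I_2, (I_n\otimes J)\mbf{p}, \mbf{p}\} \subseteq$ Null$(R_{W}(\mbf{p}))$. By Lemma \ref{Lem:linearly_independence}, the four vectors generating this span---the two columns of $\mathds{1}\otimes I_2$, the rotation $(I_n\otimes J)\mbf{p}$, and the scaling $\mbf{p}$---are linearly independent under the standing assumptions $\mbf{p} \neq 0$ and no collocated positions, so this span is precisely four-dimensional. By the paragraph preceding the statement (citing \cite{zhao2016bearing,sun2017distributed}), these four vectors are exactly the trivial infinitesimal weak motions of Definition \ref{trivial} for the edgeless case.

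For sufficiency ($\Leftarrow$), I would assume $\rank(R_{W}(\mbf{p})) = 2n-4$, so Null$(R_{W}(\mbf{p}))$ has dimension $4$. Since the four-dimensional trivial-motion subspace is contained in Null$(R_{W}(\mbf{p}))$ and both have dimension $4$, they coincide; hence every infinitesimal weak motion is trivial, and the framework is infinitesimally weakly rigid by Definition \ref{weak_rigidity_trivial}. For necessity ($\Rightarrow$), infinitesimal weak rigidity means every element of Null$(R_{W}(\mbf{p}))$ is trivial, so Null$(R_{W}(\mbf{p})) \subseteq$ span$\{\mathds{1}\otimes I_2, (I_n\otimes J)\mbf{p}, \mbf{p}\}$; combined with the reverse inclusion from Lemma \ref{lem_null_of_rigid matrix} this forces equality and dimension $4$, whence $\rank(R_{W}(\mbf{p})) = 2n-4$.

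Because the substantive work---the null-space inclusion and the independence of the four generators---is already carried out in Lemmas \ref{lem_null_of_rigid matrix} and \ref{Lem:linearly_independence}, I do not expect a real obstacle. The only step needing care is the interpretation: justifying that $\mbf{p}$ genuinely represents a scaling motion, and that scaling counts as trivial \emph{only} because $\mathcal{E}=\emptyset$ (an edge length would not be preserved under scaling). This is exactly the distinction between the three-dimensional trivial space of Theorem \ref{Thm:Inf_Rank1} and the four-dimensional one here, and it is the single place where the edgeless hypothesis is essential.
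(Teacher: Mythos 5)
Your proposal is correct and follows essentially the same route as the paper: the paper also derives the theorem from the inclusion span$\{\mathds{1}\otimes I_2, (I_n\otimes J)\mbf{p}, \mbf{p}\}\subseteq\mathrm{Null}(R_W(\mbf{p}))$ of Lemma \ref{lem_null_of_rigid matrix}, the identification of these vectors with translations, rotations, and scalings, and Definition \ref{weak_rigidity_trivial}. You merely make explicit the rank--nullity and dimension-counting steps that the paper leaves implicit in ``follows directly.''
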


\section{The Formation Control Problem on Three-Agent Formations}
\label{Sec:Formation Control Problem}

Let $t \in [0,\infty)$ be time. We assume that the motion of an agent $i$ is governed by a single integrator, i.e.,
\begin{equation}
\frac{d}{dt}\mbf{p}_i=\dot{\mbf{p}}_i=u_i,
\end{equation}
where $u_i$ is a control input. Define the following two column vectors of squared distances and cosines
\begin{align}
d_c(\mbf{p}) &= [\ldots, d_{ij}^2, \ldots]^\top_{(i,j) \in \mathcal{E}}, \\
c_c(\mbf{p}) &= [\ldots, \cos\theta_{ij}^k, \ldots]^\top_{(k,i,j) \in \mathcal{A}}.
\end{align}
Similarly, $d_c^*$ and $c_c^*$ are defined as vectors of desired squared distance constraints and cosine constraints respectively, and both of them are constant. Then, an error vector can be defined as follows
\begin{equation}
\mbf{e}(\mbf{p})=[d_c(\mbf{p})^\top c_c(\mbf{p})^\top]^\top - [d_c^{*\top} c_c^{*\top}]^\top.
\end{equation}
If either $\mathcal{E}=\emptyset$ or $\mathcal{A}=\emptyset$, then the error vector is $\mbf{e}(\mbf{p})=d_c(\mbf{p})-d_c^*$ or $\mbf{e}(\mbf{p})=c_c(\mbf{p})-c_c^*$ respectively. 

We consider the following formation control problem.
\begin{problem}
\label{problem1}
The weakly rigid formation control problem is to design a control input $u_i$, $\forall i \in \mathcal{V}$, such that $\mathbf{e} \to 0$ as t $\to \infty$.
\end{problem}
Since we only consider a three-agent formation problem with two distance constraints and one angle constraint. The error vector is written as $\mbf{e}(\mbf{p})=[e_{12} \quad e_{13} \quad e^1_{23}]^\top$, where $e_{ij}=d_{ij}^2-d_{ij}^{*2}$ and $e^k_{ij}=\cos\theta^k_{ij}-\cos(\theta^k_{ij})^*$.
\subsection{Equations of motion}
The gradient-descent law \cite{krick2009stabilisation,bishop2015distributed,park2014stability} is employed to make a formation control system stable. First, we consider the control law defined as 
\begin{equation}
\dot{\mbf{p}}=u \triangleq -(\nabla\mbf{e}(\mbf{p}))^\top\mbf{e}(\mbf{p}).
\end{equation}
The control law can be written as
\begin{subequations}
\begin{align}
\dot{\mbf{p}}=u = -(\nabla\mbf{e})^\top\mbf{e} 
&= -\frac{\partial}{\partial \mbf{p}}\left(\begin{bmatrix}
d_c(\mbf{p})\\
c_c(\mbf{p})
\end{bmatrix} - \begin{bmatrix}
d_c^*\\
c_c^*
\end{bmatrix}\right)^\top \mbf{e}(\mbf{p}), \nonumber
 \\
&= -\frac{\partial}{\partial \mbf{p}}\left(\begin{bmatrix}
d_c(\mbf{p})\\
c_c(\mbf{p})
\end{bmatrix}\right)^\top \mbf{e}(\mbf{p}),
\\
&= -R_{W}(\mbf{p})^\top \mbf{e}(\mbf{p}) \label{control_law01}.
\end{align} \label{controller}
\end{subequations} 
In the case of the three-agent formation, the equation (\ref{control_law01}) can be again written as
\begin{align}
\dot{\mbf{p}}=u
&= -R_{W}(\mbf{p})^\top \mbf{e}(\mbf{p}) \nonumber
\\
&= -(E(\mbf{p})\otimes I_2)\mbf{p}
\end{align}
where $R_{W}(\mbf{p})$ is defined by
$$
R_{W}(\mbf{p})=\begin{bmatrix}
2\mbf{z}_{12}^\top & -2\mbf{z}_{12}^\top 	&		0			\\
2\mbf{z}_{13}^\top & 		0 			&-2\mbf{z}_{13}^\top 	\\
\alpha				& \beta			& \gamma
\end{bmatrix},
$$
$$
\alpha=\frac{\partial}{\partial \mbf{p}_1}\cos\theta^1_{23}, \;
\beta=\frac{\partial}{\partial \mbf{p}_2}\cos\theta^1_{23}, \;
\gamma=\frac{\partial}{\partial \mbf{p}_3}\cos\theta^1_{23},
$$ $\alpha,\beta,\gamma \in \mathbb{R}^{1 \times 2}$,
and $E(\mbf{p})=$
$$
\small	
\begin{bmatrix}
2e_{12}+2e_{13}+\alpha_{\mbf{p}_1}e_{23}^1 & -2e_{12}+\alpha_{\mbf{p}_2}e_{23}^1 & -2e_{13}+\alpha_{\mbf{p}_3}e_{23}^1\\

-2e_{12}+\beta_{\mbf{p}_1}e_{23}^1 & 2e_{12}+\beta_{\mbf{p}_2}e_{23}^1 & \beta_{\mbf{p}_3}e_{23}^1 \\

-2e_{13}+\gamma_{\mbf{p}_1}e_{23}^1 & \gamma_{\mbf{p}_2}e_{23}^1 & 2e_{13}+\gamma_{\mbf{p}_3}e_{23}^1
\end{bmatrix}.
$$
\normalsize 
In the matrix $E(\mbf{p})$, $\alpha_{\mbf{p}_1}$, $\alpha_{\mbf{p}_2}$ and $\alpha_{\mbf{p}_3}$ are coefficients of $\mbf{p}_1$, $\mbf{p}_2$ and $\mbf{p}_3$ in $\alpha$, respectively. Similarly, $\beta_{\mbf{p}_1}$, $\beta_{\mbf{p}_2}$, $\beta_{\mbf{p}_3}$, $\gamma_{\mbf{p}_1}$, $\gamma_{\mbf{p}_2}$ and $\gamma_{\mbf{p}_3}$ are defined. Also, equations of $\alpha_{\mbf{p}_2}=\beta_{\mbf{p}_1}$, $\alpha_{\mbf{p}_3}=\gamma_{\mbf{p}_1}$ and $\beta_{\mbf{p}_3}=\gamma_{\mbf{p}_2}$ hold, i.e. the matrix $E(\mbf{p})$ is symmetric.

We define a desired equilibrium set and an incorrect equilibrium set as
\begin{align}
\mathcal{P}^* &= \set{\mbf{p} \in \mathbb{R}^{2n} \given \mbf{e}=0}, \\
\mathcal{P}_i &= \set{\mbf{p} \in \mathbb{R}^{2n} \given R_{W}^\top\mbf{e}=0, \mbf{e}\neq0},
\end{align}
respectively. The first set $\mathcal{P}^*$ corresponds to a desired target formation, and the second set $\mathcal{P}_i$ does not correspond to a desired target formation but makes the equation (\ref{control_law01}) become zero. Both of the sets constitute the set of all equilibria.
\subsection{Analysis of the incorrect equilibrium points}
\begin{lemma}\label{incorrect_collinear}
In the case of the three-agent formation, incorrect equilibria take place only when the three agents are collinear.
\end{lemma}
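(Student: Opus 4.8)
The plan is to reformulate membership in the incorrect-equilibrium set as a left-null-space condition on $R_{W}$ and then exploit the block structure of the explicitly given $R_{W}(\mbf{p})$. By definition, a point $\mbf{p}\in\mathcal{P}_i$ satisfies $R_{W}(\mbf{p})^\top\mbf{e}(\mbf{p})=0$ with $\mbf{e}=[e_{12},e_{13},e_{23}^1]^\top\neq0$, so $\mbf{e}$ is a nonzero vector in the left null space of $R_{W}$. Writing $R_{W}^\top\mbf{e}=0$ and grouping its six scalar components into the three $\mathbb{R}^2$ blocks associated with $\mbf{p}_1,\mbf{p}_2,\mbf{p}_3$, the blocks for $\mbf{p}_2$ and $\mbf{p}_3$ read $2e_{12}\mbf{z}_{12}=e_{23}^1\beta^\top$ and $2e_{13}\mbf{z}_{13}=e_{23}^1\gamma^\top$, while the block for $\mbf{p}_1$ is their negative sum and is redundant because $\alpha+\beta+\gamma=0$ by translation invariance of $\cos\theta_{23}^1$.

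The crux is a pair of geometric facts about the angle gradients $\beta=\partial\cos\theta_{23}^1/\partial\mbf{p}_2$ and $\gamma=\partial\cos\theta_{23}^1/\partial\mbf{p}_3$. Starting from $\cos\theta_{23}^1=\mbf{z}_{12}^\top\mbf{z}_{13}/(\norm{\mbf{z}_{12}}\norm{\mbf{z}_{13}})$ and differentiating, I would show that $\beta$ is orthogonal to the edge it pivots about, i.e. $\beta\mbf{z}_{12}=0$, with magnitude $\norm{\beta^\top}=\sin\theta_{23}^1/\norm{\mbf{z}_{12}}$, and symmetrically $\gamma\mbf{z}_{13}=0$ with $\norm{\gamma^\top}=\sin\theta_{23}^1/\norm{\mbf{z}_{13}}$. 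Since no two agents are collocated (so $\mbf{z}_{12},\mbf{z}_{13}\neq0$), this yields the key equivalence $\beta^\top=0\iff\gamma^\top=0\iff\sin\theta_{23}^1=0$.

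With these facts the conclusion follows by a short linear-independence argument in $\mathbb{R}^2$. Arguing by contraposition, suppose the three agents are not collinear, so $\sin\theta_{23}^1\neq0$ and hence $\beta^\top,\gamma^\top$ are nonzero. In the $\mbf{p}_2$-block, $2e_{12}\mbf{z}_{12}=e_{23}^1\beta^\top$ equates a vector parallel to $\mbf{z}_{12}$ with one parallel to $\beta^\top$; since these two nonzero vectors are orthogonal, they are linearly independent in $\mathbb{R}^2$, forcing $e_{12}=0$ and $e_{23}^1=0$. Substituting $e_{23}^1=0$ into the $\mbf{p}_3$-block gives $2e_{13}\mbf{z}_{13}=0$, whence $e_{13}=0$. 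Thus $\mbf{e}=0$, contradicting $\mbf{e}\neq0$; therefore an incorrect equilibrium can occur only when $\sin\theta_{23}^1=0$, i.e. when the three agents are collinear.

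I expect the main obstacle to be the gradient computation, specifically establishing the orthogonality $\beta\mbf{z}_{12}=0$ together with the exact magnitude $\norm{\beta^\top}=\sin\theta_{23}^1/\norm{\mbf{z}_{12}}$ (and its $\gamma$ counterpart), since everything downstream — the independence of $\mbf{z}_{12}$ and $\beta^\top$ and the resulting collapse of $\mbf{e}$ to zero — rests on that perpendicular-and-nonvanishing structure. Once one recognizes the bracketed vector arising in $\beta^\top$ as (minus) the component of $\mbf{z}_{13}$ perpendicular to $\mbf{z}_{12}$, whose norm is $\norm{\mbf{z}_{13}}\sin\theta_{23}^1$, the magnitude formula is immediate and the rest of the argument is routine.
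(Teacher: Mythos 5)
Your proof is correct, and it reaches the conclusion by a route that differs in a useful way from the paper's. Both arguments start from the same three equilibrium equations $R_W^\top\mbf{e}=0$ written block-wise, but the paper extracts collinearity directly: it solves the $\mbf{p}_3$-block for $\mbf{z}_{12}$ as an explicit scalar multiple of $\mbf{z}_{13}$, which requires dividing by $e^1_{23}$ and hence a separate (and, as written in the paper, rather loosely justified) claim that $e^1_{23}\neq 0$ on $\mathcal{P}_i$. You instead argue by contraposition using the structural facts $\beta\mbf{z}_{12}=0$ and $\gamma\mbf{z}_{13}=0$ with $\beta^\top,\gamma^\top$ vanishing exactly when $\sin\theta^1_{23}=0$; these are correct (one checks $\beta^\top=-\norm{\mbf{z}_{12}}^{-1}\bigl(\hat{\mbf{z}}_{13}-\cos\theta^1_{23}\,\hat{\mbf{z}}_{12}\bigr)$, which is orthogonal to $\mbf{z}_{12}$ and has norm $\sin\theta^1_{23}/\norm{\mbf{z}_{12}}$). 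Equating a multiple of $\mbf{z}_{12}$ with a multiple of a nonzero orthogonal vector then forces $e_{12}=e^1_{23}=0$ and subsequently $e_{13}=0$, contradicting $\mbf{e}\neq 0$. What your version buys is that the case $e^1_{23}=0$ never needs to be excluded up front — it falls out of the linear-independence argument — so the proof is cleaner at precisely the point where the paper's justification is weakest; what the paper's version buys is the explicit proportionality constant relating $\mbf{z}_{12}$ and $\mbf{z}_{13}$ at an incorrect equilibrium, and the observation (used later in the instability analysis) that $\alpha,\beta,\gamma$ all vanish there. Your remark that the $\mbf{p}_1$-block is redundant because $\alpha+\beta+\gamma=0$ is also correct and matches the translation invariance the paper exploits implicitly.
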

\begin{proof}
The equation (\ref{control_law01}) can be written as
\begin{subequations}
\begin{align}
\dot{\mbf{p}}_1 &= -2\mbf{z}_{12}e_{12}-2\mbf{z}_{13}e_{13}-\alpha^\top e^1_{23} \label{collinear_01} \\ 
\dot{\mbf{p}}_2 &= 2\mbf{z}_{12}e_{12}-\beta^\top e^1_{23} \label{collinear_02} \\ 
\dot{\mbf{p}}_3 &= 2\mbf{z}_{13}e_{13}-\gamma^\top e^1_{23} \label{collinear_03}
\end{align}
\end{subequations}
In the incorrect equilibrium set $\mathcal{P}_i$, the equation (\ref{collinear_03}) is calculated as
\begin{equation}
\mbf{z}_{12}=\left(\frac{\norm{\mbf{z}_{12}}}{\norm{\mbf{z}_{13}}}\cos\theta^1_{23}-2\norm{\mbf{z}_{12}}\norm{\mbf{z}_{13}}\frac{e_{13}}{e^1_{23}} \right) \mbf{z}_{13} \mid_{\mbf{p} \in \mathcal{P}_i} \label{Eq:collinear_04}
\end{equation}
It follows that $\mbf{p}_1$, $\mbf{p}_2$ and $\mbf{p}_3$ must be collinear from the equation (\ref{Eq:collinear_04}).
The equations (\ref{collinear_01}) and (\ref{collinear_02}) also give us similar results. We assumed that there are no position vectors overlapping each other. Because the cosine cannot be defined if there exists at least one overlapped point of two agents. Thus, $e^1_{23}$ cannot be equal to zero in the incorrect equilibrium set $\mathcal{P}_i$ and, regardless of the values of $e_{12}$ and $e_{13}$, the three agents must be collinear.
The formation shape of the three agents falls into one of three cases as depicted in Fig. \ref{Fig:formation_forms}.
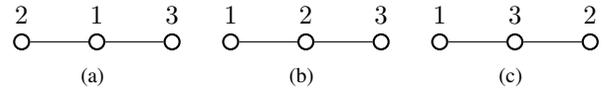
\begin{figure}[h]
\centering
\subfigure[]{ 
\begin{tikzpicture}[scale=1]
\node[place] (node2) at (-1,0) [label=above:$2$] {};
\node[place] (node1) at (0,0) [label=above:$1$] {};
\node[place] (node3) at (1,0) [label=above:$3$] {};

\draw[lineUD] (node1)  -- node {}(node2);
\draw[lineUD] (node1)  -- node {}(node3);
\end{tikzpicture}
} 
\subfigure[]{ 
\begin{tikzpicture}[scale=1]
\node[place] (node1) at (-1,0) [label=above:$1$] {};
\node[place] (node2) at (0,0) [label=above:$2$] {};
\node[place] (node3) at (1,0) [label=above:$3$] {};

\draw[lineUD] (node1)  -- node {}(node2);
\draw[lineUD] (node2)  -- node {}(node3);
\end{tikzpicture}
} 
\subfigure[]{ 
\begin{tikzpicture}[scale=1]
\node[place] (node1) at (-1,0) [label=above:$1$] {};
\node[place] (node3) at (0,0) [label=above:$3$] {};
\node[place] (node2) at (1,0) [label=above:$2$] {};

\draw[lineUD] (node1)  -- node {}(node3);
\draw[lineUD] (node2)  -- node {}(node3);
\end{tikzpicture}
} 
\caption{Three formation forms which can occur at the incorrect equilibria.} \label{Fig:formation_forms}
\end{figure}
\end{proof}
Note that the stability of an equilibrium point is independent of a rigid-body translation, a rotation and a scaling of a framework. Because relative distances and subtended angles only matter. Therefore, without loss of generality, we suppose that the three agents are on the x-axis to analyze the stability at the incorrect equilibria. Also, it is observed that $\frac{\partial}{\partial p_i} \cos\theta_{23}^1 = -\sin\theta_{23}^1 \frac{\partial \theta_{23}^1}{\partial p_i}$. Thus, if the three agents are in a collinear configuration, there holds $\theta_{23}^1 = 0$ or $\pi$, which implies that the values of $\alpha$, $\beta$, and $\gamma$ calculated at an incorrect equilibrium are 0.

To analyze the stability at the incorrect equilibria, we linearize the system (\ref{control_law01}). The negative Jacobian $J(\mbf{p})$ of the system (\ref{control_law01}) with respect to $\mbf{p}$ is given by
\begin{align}
J(\mbf{p})=-\frac{\partial}{\partial \mbf{p}}\dot{\mbf{p}}=R_{W}(\mbf{p})^\top R_{W}(\mbf{p})+E(\mbf{p})\otimes I_2 \nonumber \\ 
+ \sum_{i=1}^3(I_3\otimes\mbf{p}_i) \frac{\partial}{\partial \mbf{p}} \begin{bmatrix}
\alpha_{\mbf{p}_i}\\
\beta_{\mbf{p}_i}\\
\gamma_{\mbf{p}_i}
\end{bmatrix} e^1_{23}. \label{negative_jaco}
\end{align}
If $J(\mbf{p})$ has a negative eigenvalue at the incorrect equilibrium point, then the system at the incorrect equilibrium is unstable. We also use a permutation matrix $T$ which reorders columns of matrix such that
\begin{align}
R_WT&=[R_x \,\,\, R_y]=\bar{R}_W, \nonumber \\ 
P_1T&=[P_{1x} \,\,\, P_{1y}]=\bar{P}_1, \nonumber \\ 
C_1T&=[C_{1x} \,\,\, C_{1y}]=\bar{C}_1, \nonumber 
\end{align}
where $R_i \in \mathbb{R}^{3\times3}$, $P_{1i} \in \mathbb{R}^{3\times3}$ and $C_{1i} \in \mathbb{R}^{3\times3}$ are matrices whose columns are composed of the columns of coordinate $i$ in the matrix $R_W$, $P_1$ and $C_1$ respectively, and $P_1\triangleq(I_3\otimes \mbf{p}_1^\top)$, $C_1 \triangleq \frac{\partial}{\partial \mbf{p}}[\alpha_{\mbf{p}_1} \,\, \beta_{\mbf{p}_1} \,\, \gamma_{\mbf{p}_1}]^\top$. Similarly, $\bar{P}_2$, $\bar{P}_3$, $\bar{C}_2$ and $\bar{C}_3$ are defined in the same way.
\begin{lemma}\label{incorrect_negative_eigen}
Let $\mbf{p}_*$ be in the incorrect equilibrium set $\mathcal{P}_i$. Then, $E(\mbf{p}_*)$ has at least one negative eigenvalue.
\end{lemma}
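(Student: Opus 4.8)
The plan is to show that the symmetric matrix $E(\mbf{p}_*)$ has strictly negative trace, which immediately forces at least one negative eigenvalue. First I would invoke Lemma \ref{incorrect_collinear}: at any $\mbf{p}_* \in \mathcal{P}_i$ the three agents are collinear, so $\theta^1_{23}\in\{0,\pi\}$ and hence $\alpha=\beta=\gamma=0$, as noted in the discussion following that lemma. Substituting $\alpha=\beta=\gamma=0$ into the equilibrium condition $R_{W}^\top\mbf{e}=0$ written componentwise as (\ref{collinear_01})--(\ref{collinear_03}) leaves $2\mbf{z}_{12}e_{12}=0$ and $2\mbf{z}_{13}e_{13}=0$. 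Since no two agents coincide, $\mbf{z}_{12},\mbf{z}_{13}\neq 0$, so $e_{12}=e_{13}=0$, while $e^1_{23}\neq 0$. Consequently every entry of $E(\mbf{p}_*)$ carrying a factor $e_{12}$ or $e_{13}$ drops out and $E(\mbf{p}_*)=e^1_{23}M$, where $M$ is the symmetric matrix whose rows are the coefficient triples $(\alpha_{\mbf{p}_1},\alpha_{\mbf{p}_2},\alpha_{\mbf{p}_3})$, $(\beta_{\mbf{p}_1},\beta_{\mbf{p}_2},\beta_{\mbf{p}_3})$, $(\gamma_{\mbf{p}_1},\gamma_{\mbf{p}_2},\gamma_{\mbf{p}_3})$.

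The second step is to evaluate $\trace E(\mbf{p}_*)=e^1_{23}(\alpha_{\mbf{p}_1}+\beta_{\mbf{p}_2}+\gamma_{\mbf{p}_3})=e^1_{23}\,\trace M$. To obtain the three diagonal coefficients I would differentiate $\cos\theta^1_{23}=(\norm{\mbf{z}_{12}}^2+\norm{\mbf{z}_{13}}^2-\norm{\mbf{z}_{23}}^2)/(2\norm{\mbf{z}_{12}}\norm{\mbf{z}_{13}})$ with respect to $\mbf{p}_1,\mbf{p}_2,\mbf{p}_3$, where the contractions established in Lemma \ref{partial_deriv} supply the needed pieces. Expressing each gradient as a linear combination of the relative vectors $\mbf{z}_{12},\mbf{z}_{13},\mbf{z}_{23}$ and reading off the coefficient of the matching position, then writing $a=\norm{\mbf{z}_{12}}$, $b=\norm{\mbf{z}_{13}}$, $c=\norm{\mbf{z}_{23}}$, a short calculation collapses the sum to $\trace M=\big(a^2(c^2-a^2)+b^2(c^2-b^2)\big)/(a^3b^3)$.

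The third step is a case analysis over the three collinear forms of Fig.~\ref{Fig:formation_forms}, where collinearity pins down both $c$ in terms of $a,b$ and the sign of $e^1_{23}$ (assuming the target is a genuine triangle, i.e. $\cos(\theta^1_{23})^*\in(-1,1)$). When agent $1$ is interior, $c=a+b$ and $\theta^1_{23}=\pi$, so $\trace M=2(a^2+ab+b^2)/(a^2b^2)>0$ while $e^1_{23}=-1-\cos(\theta^1_{23})^*<0$; when agent $2$ or $3$ is interior, $c=\card{a-b}$ and $\theta^1_{23}=0$, so $\trace M=-2(a^2-ab+b^2)/(a^2b^2)<0$ while $e^1_{23}=1-\cos(\theta^1_{23})^*>0$. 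In every case the product $\trace E(\mbf{p}_*)=e^1_{23}\,\trace M$ is strictly negative, and since $E(\mbf{p}_*)$ is symmetric its eigenvalues are real and sum to this negative trace, so at least one of them is negative.

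I expect the main obstacle to be the bookkeeping in the second step: correctly computing the three diagonal coefficients $\alpha_{\mbf{p}_1},\beta_{\mbf{p}_2},\gamma_{\mbf{p}_3}$ while tracking which relative vectors each gradient depends on and the resulting signs, and then verifying that their sum simplifies to the compact expression above. Once that expression is secured, the sign determination across the three collinear forms is routine, and the reduction to $\trace E(\mbf{p}_*)<0$ makes the eigenvalue conclusion immediate without any further spectral computation.
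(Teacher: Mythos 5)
Your proof is correct, but it takes a genuinely different route from the paper's. The paper argues through the quadratic form, asserting $\bar{\mbf{p}}^\top\left[E(\mbf{p}_*)\otimes I_2\right]\bar{\mbf{p}} = \sum_{(i,j)\in\mathcal{E}}e_{ij}(\mbf{p}_*)\norm{\bar{\mbf{p}}_i-\bar{\mbf{p}}_j}^2$ on the grounds that the angle terms ``vanish'' because $\alpha=\beta=\gamma=0$ at collinearity, and then defers to Lemma 1 of \cite{park2014stability}. Your observation that the equilibrium conditions force $e_{12}=e_{13}=0$ (so the entire residual error sits in $e^1_{23}$) in fact exposes a weakness in that argument: the right-hand side of the paper's identity is then identically zero, yet $E(\mbf{p}_*)=e^1_{23}M$ is not the zero matrix, because the coefficients $\alpha_{\mbf{p}_i},\beta_{\mbf{p}_i},\gamma_{\mbf{p}_i}$ do not vanish at collinearity even though their $\mbf{p}_*$-weighted sums $\alpha,\beta,\gamma$ do. Your trace computation sidesteps this entirely: differentiating the cosine gives $\trace M=\frac{2}{ab}-2\cos\theta^1_{23}\left(\frac{1}{a^2}+\frac{1}{b^2}\right)$, which agrees with your closed form $\left(a^2(c^2-a^2)+b^2(c^2-b^2)\right)/(a^3b^3)$ via the law of cosines, and the sign bookkeeping across the three collinear forms of Fig.~\ref{Fig:formation_forms} checks out, yielding $\trace E(\mbf{p}_*)=e^1_{23}\trace M<0$ in every case under the standing assumption $\cos(\theta^1_{23})^*\in(-1,1)$ (i.e.\ the target is a nondegenerate triangle --- worth stating explicitly, since the paper never does). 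What the paper's approach would buy, if it went through, is uniformity with the standard distance-only instability analysis; what yours buys is a self-contained, fully verifiable argument that does not lean on an external lemma whose hypothesis (a strictly negative distance error on some edge) is not met at these equilibria.
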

\begin{proof}
Consider a configuration $\bar{\mbf{p}}$, then the following equation holds
$$
\bar{\mbf{p}}^\top[E(\mbf{p}_*)\otimes I_3]\bar{\mbf{p}} = \sum_{(i,j)\in\mathcal{E}}\mbf{e}_{ij}(\mbf{p}_*)\norm{\bar{\mbf{p}}_i-\bar{\mbf{p}}_j}^2,
$$
where the parts involving $\mbf{e}^1_{23}$ vanished since $\alpha|_{p=p_*}=0$, $\beta|_{p=p_*}=0$, and $\gamma|_{p=p_*}=0$. The remaining of this proof is similar to the proof of Lemma 1 in \cite{park2014stability}.
\end{proof}

The permutated matrix $\bar{J}(\mbf{p})$ is given by
\begin{align}
\bar{J}(\mbf{p}) &= T^\top J(\mbf{p}) T   \nonumber \\
&= \bar{R}_W^\top \bar{R}_W + I_2\otimes E(\mbf{p})+ \sum_{i=1}^3(\bar{P}_i^\top \bar{C}_i)e^1_{23} \nonumber \\
&= \begin{bmatrix}
\bar{J}_{11} & \bar{J}_{12} \\
\bar{J}_{21} & \bar{J}_{22}
\end{bmatrix}, \nonumber
\end{align}
where \\ \small
$\bar{J}_{11} = R_x^\top R_x+E(\mbf{p})+\sum_{i=1}^3P_{ix}C_{ix}e^1_{23}$, \\ 
$\bar{J}_{12} = R_x^\top R_y+\sum_{i=1}^3P_{ix}C_{iy}e^1_{23}$, \\
$\bar{J}_{21} = R_y^\top R_x+\sum_{i=1}^3P_{iy}C_{ix}e^1_{23}$, \\
$\bar{J}_{22} = R_y^\top R_y+E(\mbf{p})+\sum_{i=1}^3P_{iy}C_{iy}e^1_{23}$. \\
\normalsize
\begin{theorem}
\label{Thm:incorrect_unstable}
The system (\ref{controller}) at any incorrect equilibrium point $\mbf{p}_*$ is unstable.
\end{theorem}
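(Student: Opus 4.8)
The plan is to show that the symmetric matrix $J(\mbf{p}_*)$ from (\ref{negative_jaco}) has a strictly negative eigenvalue, which by the criterion already noted for (\ref{controller}) is enough to conclude instability. First I would record that the control law is the negative gradient of $V(\mbf{p}) = \tfrac12\norm{\mbf{e}(\mbf{p})}^2$, since $R_{W}(\mbf{p}) = \partial\mbf{e}/\partial\mbf{p}$ gives $R_{W}^\top\mbf{e} = (\partial\mbf{e}/\partial\mbf{p})^\top\mbf{e} = \nabla V$. Hence $J(\mbf{p})$ is the Hessian of $V$ and is therefore symmetric, and the permuted matrix $\bar{J}(\mbf{p}) = T^\top J(\mbf{p}) T$ is symmetric with the same spectrum as $J(\mbf{p})$ (permutation similarity). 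It thus suffices to exhibit a vector $w$ with $w^\top \bar{J}(\mbf{p}_*) w < 0$. By Lemma \ref{incorrect_collinear}, at any $\mbf{p}_* \in \mathcal{P}_i$ the three agents are collinear, so without loss of generality I place them on the $x$-axis.

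The main computation is to evaluate the lower-right block $\bar{J}_{22}$ at such a configuration. On the $x$-axis three facts hold simultaneously: the edge vectors $\mbf{z}_{12}, \mbf{z}_{13}$ are horizontal, so their $y$-components vanish and $R_y = 0$; every agent has zero $y$-coordinate, so that $P_{iy} = y_i I_3 = 0$ for $i = 1,2,3$; and, as already observed, $\alpha = \beta = \gamma = 0$. Substituting these into $\bar{J}_{22} = R_y^\top R_y + E(\mbf{p}) + \sum_{i=1}^3 P_{iy}C_{iy}e^1_{23}$ collapses both the first term $R_y^\top R_y$ and the third sum to zero, leaving $\bar{J}_{22} = E(\mbf{p}_*)$.

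Finally I would invoke Lemma \ref{incorrect_negative_eigen}: $E(\mbf{p}_*)$ has a negative eigenvalue $\lambda < 0$, say with unit eigenvector $v \in \mathbb{R}^3$. Taking $w = [\mbf{0}^\top, v^\top]^\top$ gives $w^\top \bar{J}(\mbf{p}_*) w = v^\top \bar{J}_{22} v = v^\top E(\mbf{p}_*) v = \lambda < 0$. Since $\bar{J}(\mbf{p}_*)$ is symmetric, its smallest eigenvalue is bounded above by this Rayleigh quotient, so $\bar{J}(\mbf{p}_*)$, and hence $J(\mbf{p}_*)$, has a negative eigenvalue and $\mbf{p}_*$ is unstable. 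The main obstacle is the block identity $\bar{J}_{22} = E(\mbf{p}_*)$: it requires carefully tracking how the column permutation $T$ separates the $x$- and $y$-coordinates and confirming that both $R_y$ and every $P_{iy}$ vanish in the collinear case, so that only the $E$-term survives. The appeal to symmetry is the other essential ingredient, since it is what upgrades a single negative value of the quadratic form into a genuine negative eigenvalue.
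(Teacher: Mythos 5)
Your proof is correct and follows essentially the same route as the paper: collinearity from Lemma \ref{incorrect_collinear}, the coordinate permutation $T$, the identification of the $y$-block of $\bar{J}(\mbf{p}_*)$ with $E(\mbf{p}_*)$, and the negative eigenvalue supplied by Lemma \ref{incorrect_negative_eigen}. The one refinement is that your Rayleigh-quotient argument on the symmetric matrix $\bar{J}(\mbf{p}_*)$ needs only $\bar{J}_{22}=E(\mbf{p}_*)$, whereas the paper asserts the full block-diagonal form (i.e., that $\bar{J}_{12}$ and $\bar{J}_{21}$ also vanish, which requires $\sum_{i=1}^{3}P_{ix}C_{iy}e^1_{23}=0$) without verifying it; your version sidesteps that unproved step.
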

\begin{proof}
From Lemma \ref{incorrect_collinear}, three agents in the incorrect equilibrium set $\mathcal{P}_i$ are collinear. The stability is also independent on a rigid-body translation, a rotation of the formation. Therefore, assuming that the formation is on the x-axis, the permutated matrix $\bar{J}(\mbf{p}_*)$ is given by
$$
\bar{J}(\mbf{p}_*) = \begin{bmatrix}
R_x^\top R_x+E(\mbf{p_*})+\sum_{i=1}^3P_{ix}C_{ix}e^1_{23} & 0 \\
0 & E(\mbf{p_*})
\end{bmatrix}.
$$
From Lemma \ref{incorrect_negative_eigen}, we know that $E(\mbf{p}_*)$ has at least one negative eigenvalue and the matrix $\bar{J}(\mbf{p}_*)$ also does. Since eigenvalues of $\bar{J}(\mbf{p_*})$ and $J(\mbf{p_*})$ are the same, $J(\mbf{p_*})$ also has at least one negative eigenvalue. Thus, the system (\ref{controller}) at any incorrect equilibrium point $\mbf{p}_*$ is unstable
\end{proof}

\begin{lemma}\label{Lem:no_apporach}
Let $\mbf{p}(0)$ denote an initial position, and $Z$ and $\mathcal{C}$ are defined as 
$Z			=	[\mbf{z}_{12} \quad  \mbf{z}_{13}] \in \mathbb{R}^{2 \times 2}$,
$\mathcal{C}	=	\set{\mbf{p} \in \mathbb{R}^{2n} \given \det Z = 0}$, respectively.
If $\mbf{p}(0)$ is not in $\mathcal{C}$, then $\mbf{p}(t)$ does not approach $\mathcal{P}_i$ for any time $t \geq t_0$.
\end{lemma}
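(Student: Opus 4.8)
The plan is to monitor the signed area of the triangle, which equals $\det Z$, and to derive an exact scalar identity $\frac{d}{dt}\det Z=\phi(\mbf{p})\det Z$ along the closed loop (\ref{control_law01}). Because every point of $\mathcal{P}_i$ is collinear by Lemma~\ref{incorrect_collinear}, it satisfies $\det Z=0$ and hence lies in $\mathcal{C}$; any approach of $\mbf{p}(t)$ to $\mathcal{P}_i$ would therefore force $\det Z(t)\to 0$. I will block this by proving that $\phi$ is strictly positive on $\mathcal{P}_i$, so that $\det Z$ is driven \emph{away} from $0$ near the incorrect equilibria.

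First I would differentiate $\det Z=\det[\mbf{z}_{12}\ \mbf{z}_{13}]$ using $\dot{\mbf{z}}_{12}=\dot{\mbf{p}}_1-\dot{\mbf{p}}_2$ and $\dot{\mbf{z}}_{13}=\dot{\mbf{p}}_1-\dot{\mbf{p}}_3$ from (\ref{collinear_01})--(\ref{collinear_03}) and the multilinearity of the $2\times2$ determinant. The contributions of the distance-error terms are immediate: the components along $\mbf{z}_{12}$ or $\mbf{z}_{13}$ either vanish or return a multiple of $\det Z$, giving the part $-4(e_{12}+e_{13})\det Z$. For the angle terms I would use the fact that each partial gradient $\alpha^\top,\beta^\top,\gamma^\top$ of $\cos\theta^1_{23}$ lies in $\vspan\{\mbf{z}_{12},\mbf{z}_{13}\}$ (indeed $\beta^\top$ and $\gamma^\top$ are explicit linear combinations of $\mbf{z}_{12}$ and $\mbf{z}_{13}$, and $\alpha=-\beta-\gamma$); hence every determinant they produce is again proportional to $\det Z$. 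Collecting terms yields $\phi(\mbf{p})=-4(e_{12}+e_{13})-2e^1_{23}\big(\frac{1}{\norm{\mbf{z}_{12}}\norm{\mbf{z}_{13}}}-\frac{\cos\theta^1_{23}}{\norm{\mbf{z}_{12}}^2}-\frac{\cos\theta^1_{23}}{\norm{\mbf{z}_{13}}^2}\big)$. Integrating the scalar ODE gives $\det Z(t)=\det Z(0)\exp(\int_0^t\phi\,ds)$, so the sign of $\det Z$ is preserved and $\mbf{p}(t)$ never reaches $\mathcal{C}$ (a fortiori never reaches $\mathcal{P}_i$) in finite time.

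Second, I would evaluate $\phi$ on $\mathcal{P}_i$. There the agents are collinear, so $\alpha=\beta=\gamma=0$ as noted after Lemma~\ref{incorrect_collinear}, and the equilibrium conditions (\ref{collinear_02})--(\ref{collinear_03}) collapse to $2\mbf{z}_{12}e_{12}=0$ and $2\mbf{z}_{13}e_{13}=0$; since no two agents are collocated, $e_{12}=e_{13}=0$ while $e^1_{23}\neq0$. Thus on $\mathcal{P}_i$, $\phi=-2e^1_{23}\big(\frac{1}{\norm{\mbf{z}_{12}}\norm{\mbf{z}_{13}}}-\frac{\cos\theta^1_{23}}{\norm{\mbf{z}_{12}}^2}-\frac{\cos\theta^1_{23}}{\norm{\mbf{z}_{13}}^2}\big)$ with $\cos\theta^1_{23}=\pm1$. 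Treating $\theta^1_{23}=0$ and $\theta^1_{23}=\pi$ separately and using $\cos(\theta^1_{23})^*\in(-1,1)$ for a nondegenerate target (so that the factor $e^1_{23}=\cos\theta^1_{23}-\cos(\theta^1_{23})^*$ and the bracket carry opposite signs in each case), I expect $\phi>0$ at every point of $\mathcal{P}_i$. Finally, using the Lyapunov function $V=\frac{1}{2}\norm{\mbf{e}}^2$ with $\dot V=-\norm{R_W^\top\mbf{e}}^2\le0$, the trajectory stays in a compact sublevel set on which $\phi$ is continuous, so $\phi>0$ on $\mathcal{P}_i$ extends to $\phi\ge\phi_0>0$ on a neighborhood $\mathcal{U}$ of $\mathcal{P}_i$. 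If $\mbf{p}(t)$ approached $\mathcal{P}_i$ it would eventually remain in $\mathcal{U}$, giving $\frac{d}{dt}\log\card{\det Z}=\phi\ge\phi_0>0$ for all large $t$; then $\card{\det Z(t)}$ is eventually increasing and bounded away from $0$, contradicting $\det Z(t)\to0$. Hence $\mbf{p}(t)$ does not approach $\mathcal{P}_i$.

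The hard part will be the exact determinant bookkeeping of the first step: the whole argument hinges on the clean factorization $\frac{d}{dt}\det Z=\phi\det Z$, so I must verify that no residual term survives outside $\vspan\{\mbf{z}_{12},\mbf{z}_{13}\}$ and that the angle gradients combine into precisely the stated $\phi$. A secondary difficulty is upgrading the pointwise positivity of $\phi$ on $\mathcal{P}_i$ to a uniform lower bound on a neighborhood, which requires the sublevel sets of $V$ to be bounded and the inter-agent distances to stay bounded away from $0$ along the trajectory.
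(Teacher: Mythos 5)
Your proposal is correct and follows essentially the same route as the paper: both arguments hinge on the exact scalar identity $\frac{d}{dt}\det Z=\phi\,\det Z$ together with $\mathcal{P}_i\subset\mathcal{C}$, and your second step (evaluating $\phi$ on $\mathcal{P}_i$ using $e_{12}=e_{13}=0$, $\cos\theta^1_{23}=\pm1$, and extending the positivity to a neighborhood) is precisely the part the paper delegates to Lemma 5 of \cite{park2013control}. The only discrepancy is between your $\phi$ and the paper's printed $-\sigma$: the paper's $\sigma$ omits the cross term $2e^1_{23}/(\norm{\mbf{z}_{12}}\norm{\mbf{z}_{13}})$ and a factor of $2$ on the cosine terms, and your bookkeeping appears to be the correct one; in any case the difference does not alter the structure of either argument, since on $\mathcal{P}_i$ both expressions are sign-definite in the way the proof needs.
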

\begin{proof}
First, $\mbf{z}_{12} - \mbf{z}_{13} + \mbf{z}_{23} = 0$ and it follows that $\det[\mbf{z}_{12} \quad \mbf{z}_{13}]=\det[\mbf{z}_{12} \quad  \mbf{z}_{23}]$, which implies that $\mathcal{P}_i \subset \mathcal{C}$ from \cite{cao2007controlling}. We have the following derivative:
$\frac{d}{dt} \det Z =\frac{d}{dt}\det[\mbf{z}_{12} \quad \mbf{z}_{13}]=- \sigma \det Z$, where $\sigma(t)=4\mbf{e}_{12}+4\mbf{e}_{13}-\cos\theta_{23}^1(\frac{1}{\norm{z_{12}}^2}+\frac{1}{\norm{z_{13}}^2})\mbf{e}_{23}^1$. Thus, if $\det Z({t_0}) = 0$ then $\det Z(t) = 0, \forall t \geq 0$. The remainder of this proof is similar to the proof of Lemma 5 in \cite{park2013control}.
\end{proof}

\begin{theorem}[Stability]
\label{Thm:desired_eq_set_stable}
If $\mbf{p}(0)$ is not in $\mathcal{C}$ and $\mathcal{P}_i$, then the $\mbf{p}$ exponentially converges to a point in the desired equilibrium set $\mathcal{P}^*$.
\end{theorem}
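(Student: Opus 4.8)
The plan is to read the closed loop \eqref{controller} as the negative gradient flow of the squared formation error and to combine a LaSalle-type convergence argument with the structural facts already in hand: Lemma~\ref{Lem:no_apporach} to discard the incorrect equilibria, and Theorem~\ref{Thm:Inf_Rank1} to obtain the exponential rate. First I would take the Lyapunov candidate $V(\mbf{p}) = \tfrac{1}{2}\mbf{e}(\mbf{p})^\top\mbf{e}(\mbf{p})$. Since $\nabla\mbf{e}=R_W(\mbf{p})$ and $\dot{\mbf{p}}=-R_W^\top\mbf{e}$, differentiating along trajectories gives
\[
\dot V = \mbf{e}^\top R_W\dot{\mbf{p}} = -\mbf{e}^\top R_W R_W^\top\mbf{e} = -\norm{R_W^\top\mbf{e}}^2 \le 0,
\]
so $V$ is non-increasing and $\mbf{e}$ remains bounded. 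Because $\mathds{1}\otimes I_2\in\text{Null}(R_W)$ by Lemma~\ref{lem_null_of_rigid matrix}, the centroid $\tfrac{1}{3}\sum_i\mbf{p}_i$ is invariant under the flow; together with the $V$-bound on the constrained distances $d_{12},d_{13}$ this confines the whole three-agent configuration to a compact set. The positive limit set is therefore compact, and LaSalle's invariance principle drives the trajectory into the largest invariant subset of $\{\mbf{p}:R_W^\top\mbf{e}=0\}=\mathcal{P}^*\cup\mathcal{P}_i$.

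Next I would exclude the incorrect equilibria. By Lemma~\ref{Lem:no_apporach}, the hypothesis $\mbf{p}(0)\notin\mathcal{C}$ forces $\det Z(t)\neq 0$ for all $t\ge t_0$, whereas $\mathcal{P}_i\subset\mathcal{C}$; hence the trajectory stays off $\mathcal{C}$ and cannot approach $\mathcal{P}_i$. Its limit set thus lies entirely in $\mathcal{P}^*$, which yields $\mbf{e}\to 0$. This is consistent with Theorem~\ref{Thm:incorrect_unstable}, which shows the discarded equilibria are in fact unstable.

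For the exponential rate I would exploit that remaining off $\mathcal{C}$ is precisely remaining infinitesimally weakly rigid: for a non-collinear three-agent framework Theorem~\ref{Thm:Inf_Rank1} gives $\rank R_W = 2n-3 = 3 = m+q$, so $R_W$ has full row rank and the $3\times 3$ Gram matrix $R_W R_W^\top$ is positive definite. Once the trajectory enters a neighborhood $\mathcal{U}$ of its limit point in $\mathcal{P}^*$, lower semicontinuity of the rank and compactness of $\overline{\mathcal{U}}$ supply a uniform bound $\lambda_{\min}(R_W R_W^\top)\ge\mu>0$, whence
\[
\dot V = -\mbf{e}^\top R_W R_W^\top\mbf{e} \le -2\mu\, V,
\]
so $V(t)\le V(t_0)\,e^{-2\mu(t-t_0)}$ and $\norm{\mbf{e}(t)}$ decays exponentially. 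Convergence of the configuration to the target shape in $\mathcal{P}^*$ (unique up to the trivial translation/rotation motions) then follows.

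I expect the main obstacle to be the exponential rate rather than mere convergence. The nonvanishing of $\det Z$ from Lemma~\ref{Lem:no_apporach} only certifies $R_W R_W^\top\succ 0$ pointwise; promoting this to a \emph{uniform} lower bound $\mu$ on $\lambda_{\min}(R_W R_W^\top)$ requires confining the motion to a compact region of non-collinear, non-colliding configurations, which is where the centroid invariance, the $V$-sublevel bound, and the eventual entry into $\mathcal{U}$ must be marshalled together.
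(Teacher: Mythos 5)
Your proposal follows essentially the same route as the paper's proof: the Lyapunov function $V=\tfrac{1}{2}\norm{\mbf{e}}^2$ with $\dot V=-\norm{R_W^\top\mbf{e}}^2$, exclusion of the incorrect equilibria via Lemma~\ref{Lem:no_apporach}, and the final bound $\dot V\le-\lambda_{\min}(R_WR_W^\top)\norm{\mbf{e}}^2$ obtained from full row rank of $R_W$ away from the collinear set $\mathcal{C}$. Your write-up is in fact somewhat more careful than the paper's, which asserts positive definiteness of $R_WR_W^\top$ ``along this trajectory'' without the compactness/LaSalle machinery you marshal to make $\lambda_{\min}$ uniformly bounded away from zero.
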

\begin{proof}
We define a Lyapunov candidate function as
$
V(\mbf{e})=\frac{1}{2}\mbf{e}^\top\mbf{e}$. Notice that $V(\mbf{e}) \geq 0 \,\, \text{with} \,\, V(\mbf{e})=0 \,\, \text{iff} \,\, \mbf{e} = 0, 
$ and $V$ is radially unbounded. The error dynamics can be written by
\begin{align}
\dot{\mbf{e}} &= R_W(\mbf{e})\dot{\mbf{e}} = -R_W(\mbf{e}) R_W(\mbf{e})^\top \mbf{e}. \nonumber
\end{align}
Then, the derivative of $V(\mbf{e})$ along a trajectory of $\mathbf{e}$ is calculated as 
\begin{align}
\dot{V} = \mbf{e}^\top \dot{\mbf{e}} 
=-\mbf{e}^\top R_W R_W^\top \mbf{e}
= - \norm{R_W^\top \mbf{e}}^2. \label{eq:dot_potential_fn}
\end{align} 
We know that $\dot{V} \leq 0$, $\dot{V}$ is equal to zero iff $R_{W}^\top\mbf{e}=0$. From Theorem \ref{Thm:incorrect_unstable} and Lemma \ref{Lem:no_apporach},  and the assumption that $\mbf{p}(0) \notin \mathcal{P}_i$, it follows that $\mbf{e} \to 0$ asymptotically fast.

Moreover, it follows from $\mbf{p}(0) \notin \mathcal{P}_i$ that the initial positions are not collinear. Thus, the formation is weakly rigid and the rigidity matrix associated with a graph ${K}_3$ has full row rank from Corollary 1 of \cite{park2017rigidity}. It follows that the formation has only two distance preserving motions, i.e. a translation and a rotation. Also, we intuitively know that the infinitesimal weak motions in the case of $\mathcal{E}\neq 0$ correspond to the distance preserving motions with respect to the same formation. In this regard, the two matrices have the same null space, and thus $R_{W}(\mbf{p})$ also has full row rank for all $\mbf{p} \notin \mathcal{P}_i$. It follows from $\mbf{p}(0) \notin \mathcal{P}_i$ and Lemma \ref{Lem:no_apporach} that $R_W R_W^\top$ is positive definite, $\forall t \geq 0$. Henceforth, along a trajectory of $\mbf{e}$, the equation (\ref{eq:dot_potential_fn}) satisfies
\begin{align}
\dot{V} \leq  -\lambda_{\text{min}}(R_W R_W^\top) \norm{\mbf{e}}^2, \nonumber
\end{align}
where $\lambda_{\text{min}}$ denotes the minimum eigenvalue of $R_w R_w^\top$ along this trajectory. Thus, $\mbf{e} \to 0$ exponentially fast, which in turn implies that $\mbf{p} \to \mbf{p}^*$ for all initial positions outside the set $\mathcal{C}$, where $\mbf{p}^*$ is a point in the desired equilibrium set $\mathcal{P}^*$. Since this result holds for every $\mbf{p}(0) \notin \mathcal{C}$, we conclude that the formation system \eqref{controller} almost globally asymptotically converges to a desired configuration in $\mathcal{P}^*$.
\end{proof}

\subsection{Simulation}
\begin{figure}[t]
\centering
\subfigure[\scriptsize Trajectories of three agents from initial conditions to final conditions. \normalsize]{
\includegraphics[height = 3.6cm]{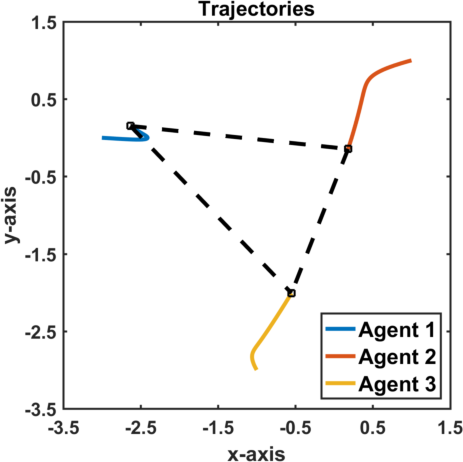}
\label{Fig:simul_control}
}
\subfigure[Two squared distance errors and one cosine error.]{
\includegraphics[height = 3.6cm]{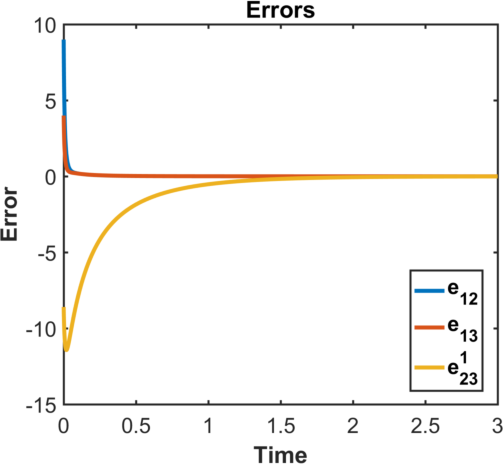}
\label{Fig:simul_error}
}
\caption{Simulation result of 3-agents formation control with 2 distances and 1 angle constraints}
\label{Fig:simul_result}
\end{figure} %

Consider a three-agent system with two distances and one angle constraints as depicted in Fig. \ref{Formation_3e1a}. For the simulation, we set the desired squared relative distances and subtended angle as $d_{12}^{*2}=8$, $d_{13}^{*2}=9$ and $(\theta^{1}_{23})^* = 40^\circ$, and set initial conditions as $\mbf{p}_1(0)=[-3 \; 0]^\top$, $\mbf{p}_2(0)=[1 \; 1]^\top$ and $\mbf{p}_3(0)=[-1 \; -3]^\top$. As a result presented in Fig. \ref{Fig:simul_result}, the squared distance errors and cosine error converge to 0 as time goes by.
\section{Modified Henneberg Construction}
\label{Sec:MHenneberg}
The Henneberg construction\cite{tay1985generating,eren2004merging} is a technique to grow minimally rigid graphs with the iterative constructions of rigid formations. By using this technique, we define a new technique termed modified Henneberg construction based on the vertex addition and edge splitting of the Henneberg construction. First, we give a definition of minimal weak rigidity.
\begin{definition}[Minimally weakly rigid] \label{Def:minimally_weak_rigidity}
If a framework $(\mathcal{G},\mbf{p})$ is weakly rigid and no single distance- or angle-constraint can be removed without losing the weak rigidity, then the framework is minimally weakly rigid.
\end{definition}
The two operations of the modified Henneberg construction are termed \myemph{weakly rigid 0-extension} and \myemph{weakly rigid 1-extension}, respectively. In the weakly rigid 0-extension, a vertex and two angles are added from the formation illustrated in Fig. \ref{tri_rigid}. Let $\tilde{\mathcal{G}} = (\tilde{\mathcal{V}},\tilde{\mathcal{E}},\tilde{\mathcal{A}})$ be a graph, where a vertex $\nu$ is adjoined so that $\tilde{\mathcal{V}}=\mathcal{V}\cup\{\nu \}$ and $\tilde{\mathcal{A}}=\mathcal{A}\cup\{\theta_{j\nu}^{i}, \theta_{i\nu}^{j} \}$ for some $i,j\in \mathcal{V}$ as illustrated in Fig. \ref{0-extension}. In the weakly rigid 1-extension, a vertex and three angles are added while one existing edge is removed from the formation illustrated in Fig. \ref{tri_rigid}. Let $\tilde{\mathcal{G}} = (\tilde{\mathcal{V}},\tilde{\mathcal{E}},\tilde{\mathcal{A}})$ be a graph, where a vertex $\nu$ is adjoined, while an edge of $\mathcal{G}$ is removed, so that $\tilde{\mathcal{V}}=\mathcal{V}\cup\{\nu \}$, $\tilde{\mathcal{E}}=\mathcal{E} \setminus \{(i,j) \}$ and $\tilde{\mathcal{A}}=\mathcal{A}\cup\{\theta_{j\nu}^{i}, \theta_{i\nu}^{j}, \theta_{ij}^{k} \}$ for some $i,j,k\in \mathcal{V}$ as illustrated in Fig. \ref{1-extension}. From the properties of the constructions, the two operations can be also termed 0-angle splitting and 1-angle splitting, respectively. The modified Henneberg construction can be used to grow minimally rigid (or minimally weakly rigid) formations with additional angles as the following result.

\begin{theorem}
\label{Thm:Mod_Hen_Const}
Frameworks constructed by the weakly rigid 0-extenstion and 1-extension from a framework $(\mathcal{G},\mbf{p})$ are minimally weakly rigid if the framework $(\mathcal{G},\mbf{p})$ is minimally rigid or minimally weakly rigid.
\end{theorem}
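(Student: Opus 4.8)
The plan is to reduce everything to the rank criterion of Theorem~\ref{Thm:Inf_Rank1} together with a constraint count. Write $\tilde{n}=n+1$ for the vertex count after either operation. Since $(\mathcal{G},\mbf{p})$ is minimally rigid or minimally weakly rigid, it carries exactly $2n-3$ constraints and $\rank(R_{W}(\mbf{p}))=2n-3$. Counting constraints, the weakly rigid $0$-extension adds one vertex and two angles, while the weakly rigid $1$-extension adds one vertex and three angles and deletes one edge; in both cases the extended framework has exactly $(2n-3)+2=2\tilde{n}-3$ constraints. Hence it suffices to show that its weak rigidity matrix $\tilde{R}_{W}$ has rank $2\tilde{n}-3$: then the extension is infinitesimally weakly rigid, and since the number of rows equals the rank, the rows are independent, so deleting any one constraint lowers the rank to $2\tilde{n}-4$ and by Theorem~\ref{Thm:Inf_Rank1} destroys infinitesimal weak rigidity. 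This yields minimality in the sense of Definition~\ref{Def:minimally_weak_rigidity}, using that infinitesimal weak rigidity coincides with weak rigidity at the generic configuration at hand.

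For the $0$-extension (Fig.~\ref{0-extension}) I would order the coordinates so that the two entries of the new vertex $\nu$ are last, giving the block form
\begin{equation}
\tilde{R}_{W}=\begin{bmatrix} R_{W}(\mbf{p}) & \mbf{0}\\ C & B\end{bmatrix},
\end{equation}
where the old constraints do not involve $\mbf{p}_{\nu}$, and $B\in\mathbb{R}^{2\times 2}$ holds the derivatives of the two new angles $\theta_{j\nu}^{i},\theta_{i\nu}^{j}$ with respect to $\mbf{p}_{\nu}$. A short computation in the style of Lemma~\ref{partial_deriv} shows that the two rows of $B$ are nonzero multiples of $\big(J(\mbf{p}_{\nu}-\mbf{p}_{i})\big)^{\top}$ and $\big(J(\mbf{p}_{\nu}-\mbf{p}_{j})\big)^{\top}$, which are independent exactly when $\nu$ is not collinear with $i$ and $j$. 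Because every row of the upper block vanishes on the $\mbf{p}_{\nu}$-columns, the two new rows are independent of the old ones and of each other, so $\rank(\tilde{R}_{W})=\rank(R_{W}(\mbf{p}))+\rank(B)=(2n-3)+2=2\tilde{n}-3$.

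For the $1$-extension (Fig.~\ref{1-extension}) the matrix has the same shape, except the lower block now carries three angle rows: the two angles $\theta_{j\nu}^{i},\theta_{i\nu}^{j}$ at $i,j$ (involving $\mbf{p}_{\nu}$) and the angle $\theta_{ij}^{k}$ (not involving $\mbf{p}_{\nu}$), while the upper-left block is $R_{W}(\mbf{p})$ with the row of the deleted edge $(i,j)$ removed, of rank $2n-4$. The decisive step is to show that appending the $\theta_{ij}^{k}$-row restores the rank to $2n-3$ on the original $2n$ columns. For this I would use that, by the law of cosines and the chain rule, $\nabla\cos\theta_{ij}^{k}$ is a linear combination of $\nabla\norm{\mbf{z}_{ik}}^{2}$, $\nabla\norm{\mbf{z}_{jk}}^{2}$ and $\nabla\norm{\mbf{z}_{ij}}^{2}$ in which the coefficient of the last is $-1/(2\norm{\mbf{z}_{ik}}\norm{\mbf{z}_{jk}})\neq 0$; since the edges $(i,k),(j,k)$ survive the deletion, the first two gradients already lie in the row space of the reduced matrix, so adding the $\theta_{ij}^{k}$-row spans the same subspace as restoring the deleted gradient $\nabla\norm{\mbf{z}_{ij}}^{2}$ and returns the rank to $2n-3$. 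The two $\nu$-angle rows then contribute an independent rank-$2$ block $B$ exactly as above, giving $\rank(\tilde{R}_{W})=(2n-3)+2=2\tilde{n}-3$.

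I expect the rank-restoration step for the $1$-extension to be the main obstacle: one must guarantee that exchanging the edge $(i,j)$ for the angle $\theta_{ij}^{k}$ leaves the rank unchanged on the old coordinates, which hinges on the nonzero weight of $\nabla\norm{\mbf{z}_{ij}}^{2}$ in $\nabla\cos\theta_{ij}^{k}$ and on the structural hypothesis that the supporting edges $(i,k),(j,k)$ remain after deletion. Together with the general-position placement of $\nu$ required for $\rank(B)=2$, these are the only delicate points; the constraint bookkeeping and the block-triangular rank computation are then routine.
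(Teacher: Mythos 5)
Your proof is correct in substance but takes a genuinely different route from the paper's. The paper argues by transformation and stays at the level of (non-infinitesimal) weak rigidity: for the $0$-extension it uses the law of sines to trade the two new angles at $\nu$ and the base edge for the two missing distances $d_{i\nu},d_{j\nu}$, and for the $1$-extension it additionally uses the law of cosines to recover the deleted distance $d_{ij}$ from $\theta_{ij}^{k}$ and the two edges at the apex $k$; this reduces both operations to the classical Henneberg vertex addition and edge splitting, and the conclusion is drawn via the equivalence (from \cite{park2017rigidity}) between weak rigidity and rigidity of the transformed distance-only framework. Moreover, the paper's argument is written only for the triangle base case of Figs.~\ref{tri_rigid}--\ref{1-extension} and simply asserts minimality. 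Your block-rank computation on the weak rigidity matrix is precisely the infinitesimal counterpart of those two substitutions: the independence of the two new angle rows on the $\mbf{p}_{\nu}$-columns (nonzero multiples of $(J\mbf{z}_{\nu i})^{\top}$ and $(J\mbf{z}_{\nu j})^{\top}$) mirrors the law-of-sines step, and the rank restoration through the nonzero coefficient of $\nabla\norm{\mbf{z}_{ij}}^{2}$ in $\nabla\cos\theta_{ij}^{k}$ mirrors the law-of-cosines step. What your route buys is a statement for general $n$, the stronger conclusion of infinitesimal weak rigidity, an actual proof of minimality (rank equals row count, so every constraint row is essential), and an explicit record of the non-degeneracy hypotheses the paper leaves implicit ($\nu$ not collinear with $i$ and $j$; the supporting edges $(i,k),(j,k)$ surviving the deletion). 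Two points to tidy: the case of a starting framework with $\mathcal{E}=\emptyset$, where minimal weak rigidity corresponds to rank $2n-4$ by Theorem~\ref{Thm:Inf_Rank2}, needs a separate sentence in your constraint count for the $0$-extension; and the passage from ``removing a row destroys infinitesimal weak rigidity'' to ``removing a constraint destroys weak rigidity'' should be stated explicitly as a genericity assumption, as you indicate at the end of your first paragraph.
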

\begin{proof}
i$)$ In the case of weakly rigid 0-extension as illustrated in Fig. \ref{0-extension}, the operation is extended from triangular formation as Fig. \ref{tri_rigid}. The three constraints $d_{23}$, $\theta_{34}^{2}$ and $\theta_{24}^{3}$ can be changed to three distance constraints $d_{23}$,  $d_{24}$ and  $d_{34}$ by the law of sines such that
$\frac{d_{23}}{\sin{\theta_{23}^{4}}} = \frac{d_{24}}{\sin{\theta_{24}^{3}}} = \frac{d_{34}}{\sin{\theta_{34}^{2}}}.$
Thus, a formation with three constraints $d_{23}$, $\theta_{34}^{2}$ and $\theta_{24}^{3}$ can be transformed to a formation with three distance constraints  $d_{23}$,  $d_{24}$ and  $d_{34}$, i.e. the formation extended by the weakly rigid 0-extenstion as Fig. \ref{0-extension} can be transformed to the minimally rigid formation as Fig. \ref{rigid_diamond}.
ii$)$ In the case of weakly rigid 1-extension as illustrated in Fig. \ref{1-extension}, the operation is extended from a formation with two edges and subtended angle as in Fig. \ref{Formation_3e1a}. The distance $d_{23}$ can be calculated by the law of cosines as mentioned in Section \ref{Sec:weakRigidity}. Thus, with the proof of the case i) of weakly rigid 0-extension, the formation extended by the weakly rigid 1-extension can be also transformed to the rigid formation as Fig. \ref{rigid_diamond}.
Therefore, if a framework $(\mathcal{G},\mbf{p})$ is minimally rigid or minimally weakly rigid, then frameworks extended by the weakly rigid 0-extenstion or weakly rigid 1-extension are minimally weakly rigid.
\end{proof}

\section{Weak Rigidity in the Three-Dimensional Space}
\label{Sec:weakRigidity_3dim}
In this section, we extend the weak rigidity in the two-dimensional space to the concept of the three-dimensional space. We do not consider the infinitesimal weak rigidity but just the weak rigidity.

\subsection{Weak Rigidity from Rigidity Matrix in $\mathbb{R}^{3}$} 

The weak rigidty in $\mathbb{R}^{3}$ can be similarly defined as the weak rigidity in \cite{park2017rigidity}. Consider formations in \myfig\ref{Fig:ExTetrahedron}. The first formation is defined by 3 edge lengths and 3 subtended angles while the second formation is defined by 6 edge lengths. The first formation can be transformed to the second formation with the law of cosines as stated in Section \ref{Sec:weakRigidity}.

\begin{figure}[]
\centering
\subfigure[Non-rigid but weakly rigid framework in $\mathbb{R}^{3}$.]{
\includegraphics[width=0.2\textwidth]{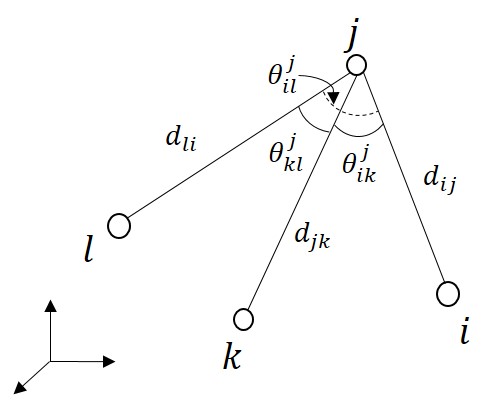}
\label{Fig:ExTetrahedron_a}
}
\subfigure[Rigid framework in $\mathbb{R}^{3}$.]{
\includegraphics[width=0.2\textwidth]{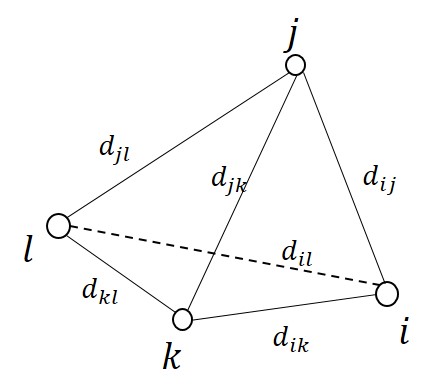}
\label{Fig:ExTetrahedron_b}
}
\caption{Tetrahedral formations under different constraints}
\label{Fig:ExTetrahedron}
\end{figure} %

\begin{definition}[weak rigidity in $\mathbb{R}^{3}$]
\label{Def:weakRigidity}
A framework $(\mathcal{G},\mbf{p})$ is \myemph{weakly rigid} in $\mathbb{R}^{3}$ if there exists a neighborhood $\mathcal{B}_{\mbf{p}} \subseteq \mathbb{R}^{3n}$ of $\mbf{p}$ such that each framework $(\mathcal{G},\mbf{q})$, $\mbf{q} \in \mathcal{B}_{\mbf{p}}$, strongly equivalent to $(\mathcal{G},\mbf{p})$ is congruent to $(\mathcal{G},\mbf{p})$.
\end{definition}

We examine weak rigidity from rigidity matrix. First, the \myemph{rigidity function} $F_D: \mathbb{R}^{dn} \rightarrow \mathbb{R}^{m}$ of $(\mathcal{G},\mbf{p})$ is defined as
$$
F_D(\mbf{p})\equiv[...,\norm{\mbf{p}_{ij}}^2,...]^\top_{(i,j) \in \mathcal{E}} \in 
\mathbb{R}^{m}.
$$
The \myemph{rigidity matrix} then is defined as the Jacobian of the rigidity function:
\begin{equation} \label{rigidity_D_matrix}
R_D(\mbf{p})=\frac{1}{2} \frac{\partial F_D(\mbf{p})}{\partial \mbf{p}} \in 
\mathbb{R}^{m \times dn}.
\end{equation}

\begin{lemma}[\cite{C:Hendrickson:SIAM1992}]
\label{Lemma:rankConditionForIR}
A framework $(\mathcal{G},\mbf{p})$ in $\mathbb{R}^{3}$ with $n \ge 3$ is infinitesimally rigid in $\mathbb{R}^{3}$ if and only if the rank of the rigidity matrix of $(\mathcal{G},\mbf{p})$ is $3n - 6$.
\end{lemma}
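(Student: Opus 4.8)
The plan is to prove this classical characterization by relating the null space of the rigidity matrix $R_D(\mbf{p})$ to the space of infinitesimal rigid-body motions of the framework in $\mathbb{R}^3$. Writing $\mbf{p}_{ij} = \mbf{p}_i - \mbf{p}_j$, I would first read off from the definition (\ref{rigidity_D_matrix}) that the row of $R_D(\mbf{p})$ associated with an edge $(i,j)\in\mathcal{E}$ acts on a variation $\delta\mbf{p} = [\delta\mbf{p}_1^\top,\dots,\delta\mbf{p}_n^\top]^\top$ as $\mbf{p}_{ij}^\top(\delta\mbf{p}_i - \delta\mbf{p}_j)$. Hence $\delta\mbf{p} \in \mathrm{Null}(R_D(\mbf{p}))$ if and only if $\mbf{p}_{ij}^\top(\delta\mbf{p}_i - \delta\mbf{p}_j) = 0$ for every edge, i.e. $\delta\mbf{p}$ is an \myemph{infinitesimal flex}. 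This identifies the null space of $R_D(\mbf{p})$ with the space of infinitesimal flexes, so the whole statement reduces to a dimension count.

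Next I would isolate the trivial part of this null space. Consider the \myemph{trivial infinitesimal motions} $\delta\mbf{p}_i = \mbf{t} + S\mbf{p}_i$, parametrized by a translation $\mbf{t}\in\mathbb{R}^3$ and a skew-symmetric $3\times 3$ matrix $S$ (an infinitesimal rotation). For any such motion $\delta\mbf{p}_i - \delta\mbf{p}_j = S\mbf{p}_{ij}$, and $\mbf{p}_{ij}^\top S \mbf{p}_{ij} = 0$ because $S = -S^\top$; hence every trivial motion is an infinitesimal flex, so the space of trivial motions is contained in $\mathrm{Null}(R_D(\mbf{p}))$. The key step is to show this space has dimension exactly $6$: I would argue that the linear map sending $(\mbf{t}, S)$ to $(\mbf{t} + S\mbf{p}_i)_{i=1}^n$ is injective. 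A pair in its kernel satisfies $S\mbf{p}_i = -\mbf{t}$ for every $i$, so $S$ annihilates every difference $\mbf{p}_i-\mbf{p}_j$; when the points are not all collinear these differences span a subspace of dimension at least two, and since a skew-symmetric $3\times 3$ matrix has even rank, such an $S$ must vanish, whence $\mbf{t}=0$ as well. The domain having dimension $3+3=6$, the trivial-motion space is exactly $6$-dimensional.

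Finally I would combine these facts with the definition of infinitesimal rigidity and rank-nullity. By definition $(\mathcal{G},\mbf{p})$ is infinitesimally rigid precisely when every infinitesimal flex is trivial, i.e. when $\mathrm{Null}(R_D(\mbf{p}))$ equals the $6$-dimensional trivial-motion space, equivalently $\dim \mathrm{Null}(R_D(\mbf{p})) = 6$. Since $R_D(\mbf{p}) \in \mathbb{R}^{m\times 3n}$, rank-nullity gives $\dim \mathrm{Null}(R_D(\mbf{p})) = 3n - \rank(R_D(\mbf{p}))$, so infinitesimal rigidity holds if and only if $\rank(R_D(\mbf{p})) = 3n - 6$, as claimed.

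I expect the main obstacle to be the dimension count in the middle step: verifying that the trivial-motion space is exactly $6$-dimensional rather than smaller requires the non-collinearity of the configuration, and it is precisely here that the nondegeneracy hypothesis on the framework enters. Degenerate configurations with all points collinear must be excluded or handled separately, since for them the rotation generators collapse and $3n-6$ is no longer the correct rank threshold.
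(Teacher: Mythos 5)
The paper does not prove this lemma at all: it is imported verbatim from Hendrickson's paper as a known result, so there is no in-paper argument to compare against. Your proof is the standard self-contained derivation and it is correct: identifying $\mathrm{Null}(R_D(\mbf{p}))$ with the infinitesimal flexes, showing the trivial motions $\delta\mbf{p}_i=\mbf{t}+S\mbf{p}_i$ ($S$ skew-symmetric) lie in that null space and form a $6$-dimensional subspace via injectivity of $(\mbf{t},S)\mapsto(\mbf{t}+S\mbf{p}_i)_i$, and closing with rank--nullity. The even-rank argument for skew-symmetric $S$ is the right way to get injectivity from non-collinearity. On the caveat you raise at the end: the degenerate collinear case does break your dimension count (the trivial-motion space drops to dimension $5$), but it does not actually threaten the equivalence as stated, because for a collinear configuration every edge vector is parallel to a single direction $v$, so each row of $R_D(\mbf{p})$ lies in the span of $\set{e_u\otimes v^\top}$ and $\rank(R_D(\mbf{p}))\le n-1<3n-6$ for $n\ge 3$; hence both sides of the ``if and only if'' are simultaneously false there and the lemma holds vacuously. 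Noting this would let you drop the ``must be excluded'' hedge and make the proof unconditional under the paper's standing assumptions (which forbid coincident points but not collinearity).
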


Consider a graph $\bar{\mathcal{G}}$, $\bar{\mathcal{G}} = (\bar{\mathcal{V}},\bar{\mathcal{E}},\bar{\mathcal{A}})$, induced from $\mathcal{G}$ in such a way that\cite{park2017rigidity}:
\begin{itemize}
\item \(\bar{\mathcal{V}} = \mathcal{V}\),
\item \(\bar{\mathcal{E}} = \set*{(i,j) \given (i,j) \in \mathcal{E} \lor \exists k \in \mathcal{V}\text{ s.t. } (k,(i,j)) \in \mathcal{A}}\),
\item \(\bar{\mathcal{A}} = \emptyset\).
\end{itemize}
Then, we can obtain the following result: 
\begin{corollary}
\label{COR:WRbyRigidity^3}
A framework $(\mathcal{G},\mbf{p})$ is weakly rigid in $\mathbb{R}^{3}$ if and only if $(\bar{\mathcal{G}},\mbf{p})$ is rigid in $\mathbb{R}^{3}$.
\end{corollary}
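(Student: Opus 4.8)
\textbf{The plan} is to reduce weak rigidity of $(\mathcal{G},\mbf{p})$ to ordinary rigidity of the augmented framework $(\bar{\mathcal{G}},\mbf{p})$ by proving that \emph{strong equivalence} for $\mathcal{G}$-frameworks coincides exactly with \emph{equivalence} for $\bar{\mathcal{G}}$-frameworks. Once this is in hand the corollary is immediate: the definition of weak rigidity in $\mathbb{R}^{3}$ (Definition \ref{Def:weakRigidity}) and the notion of rigidity of $\bar{\mathcal{G}}$ quantify over the \emph{same} neighborhood $\mathcal{B}_{\mbf{p}}$ and demand the \emph{same} congruence conclusion, because congruence depends only on the common vertex set $\mathcal{V}=\bar{\mathcal{V}}$ and is insensitive to the edge sets. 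This parallels the two-dimensional argument of \cite{park2017rigidity}.

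\textbf{The central claim} I would establish is that, for $\mbf{q}\in\mathcal{B}_{\mbf{p}}$, the framework $(\mathcal{G},\mbf{q})$ is strongly equivalent to $(\mathcal{G},\mbf{p})$ if and only if $(\bar{\mathcal{G}},\mbf{q})$ is equivalent to $(\bar{\mathcal{G}},\mbf{p})$. The bridge is the law of cosines, which for each angle $(k,i,j)\in\mathcal{A}$ ties the subtended angle at $k$ to the sides of triangle $ijk$ through $d_{ij}^2 = d_{ik}^2 + d_{jk}^2 - 2 d_{ik} d_{jk}\cos\theta_{ij}^{k}$. For the forward direction, strong equivalence preserves every edge length in $\mathcal{E}$ and every angle in $\mathcal{A}$; each angle-induced edge $(i,j)\in\bar{\mathcal{E}}$ then has its adjacent edges $(i,k),(j,k)$ preserved and its angle $\theta_{ij}^{k}$ preserved, so $d_{ij}$ is forced by the law of cosines and all of $\bar{\mathcal{E}}$ is matched. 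For the converse, equivalence of the $\bar{\mathcal{G}}$-frameworks preserves every length in $\bar{\mathcal{E}}\supseteq\mathcal{E}$; for each $(k,i,j)\in\mathcal{A}$ the three lengths $d_{ik},d_{jk},d_{ij}$ all belong to $\bar{\mathcal{E}}$, so the rearranged identity $\cos\theta_{ij}^{k} = (d_{ik}^2 + d_{jk}^2 - d_{ij}^2)/(2 d_{ik} d_{jk})$ preserves the cosine, and since $\theta_{ij}^{k}\in[0,\pi]$ the cosine determines the angle uniquely, recovering strong equivalence.

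\textbf{Combining} the claim with the shared neighborhood and the shared congruence notion yields, for a fixed $\mathcal{B}_{\mbf{p}}$, that ``every strongly equivalent $(\mathcal{G},\mbf{q})$ is congruent to $(\mathcal{G},\mbf{p})$'' holds precisely when ``every equivalent $(\bar{\mathcal{G}},\mbf{q})$ is congruent to $(\bar{\mathcal{G}},\mbf{p})$'' holds. The left statement is exactly weak rigidity of $(\mathcal{G},\mbf{p})$ and the right is exactly rigidity of $(\bar{\mathcal{G}},\mbf{p})$, which completes the equivalence.

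\textbf{The main obstacle} I anticipate lies in the step ``its adjacent edges $(i,k),(j,k)$ are preserved''. As defined, $\bar{\mathcal{E}}$ adjoins only the opposite edge $(i,j)$ for each angle, so the law-of-cosines conversion silently requires that whenever $(k,i,j)\in\mathcal{A}$ the adjacent edges $(i,k),(j,k)$ already lie in $\mathcal{E}$ (hence in $\bar{\mathcal{E}}$). Making this consistency hypothesis on $(\mathcal{G},\mathcal{A})$ explicit is the delicate part: without it, strong equivalence does not pin down the two adjacent side lengths, and the forward implication of the central claim can fail. The remaining ingredients—the algebraic law-of-cosines identities and the injectivity of $\cos$ on $[0,\pi]$—are routine.
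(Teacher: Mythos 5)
Your proposal is correct and follows essentially the same route as the paper, which itself only states that the proof ``is similar to the proof of Theorem~1 in \cite{park2017rigidity}''---namely, translating between strong equivalence of $(\mathcal{G},\cdot)$ and equivalence of $(\bar{\mathcal{G}},\cdot)$ via the law of cosines and the injectivity of $\cos$ on $[0,\pi]$. Your closing observation that the adjacent edges $(i,k),(j,k)$ must already lie in $\mathcal{E}$ is well taken, and is implicitly supplied here because the three-dimensional definition inherits the convention of \cite{park2017rigidity} that each angle $\theta_{ij}^{k}$ is subtended by two existing edges.
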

\begin{proof}
The proof is similar to the proof of Theorem 1 in \cite{park2017rigidity}. 
\end{proof}
With the above result, we know that weak rigidity of \((\mathcal{G},\mbf{p})\) can be determined by the rigidity of \((\bar{\mathcal{G}},\mbf{p})\) indirectly.

The infinitesimal rigidity of a framework is a sufficient condition for the framework to be rigid. The infinitesimal rigidity can be examined by the rank of the rigidity matrix as mentioned in Lemma \ref{Lemma:rankConditionForIR}. Therefore, we can check the weakly rigid of $(\mathcal{G},\mbf{p})$ by investigating the rank of the rigidity matrix of $(\bar{\mathcal{G}},\mbf{p})$ with Corollary \ref{COR:WRbyRigidity^3} and Lemma \ref{Lemma:rankConditionForIR}.
\begin{theorem}
\label{Thm:RImpliesWR}
A framework $(\mathcal{G},\mbf{p})$ with $n \ge 3$ is weakly rigid in $\mathbb{R}^{3}$ if the rigidity matrix $R_D(\mbf{p})$ associated with $(\bar{\mathcal{G}},\mbf{p})$ has rank $3n - 6$.
\end{theorem}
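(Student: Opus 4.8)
The plan is to chain together the two results that have already been established in this section, namely Corollary~\ref{COR:WRbyRigidity^3} and Lemma~\ref{Lemma:rankConditionForIR}, so that the rank condition on $R_D(\mbf{p})$ feeds directly into the definition of weak rigidity via the intermediate notion of infinitesimal rigidity. First I would recall that the rigidity matrix $R_D(\mbf{p})$ in the hypothesis is associated with the induced graph $(\bar{\mathcal{G}},\mbf{p})$, in which every distance constraint of $\mathcal{G}$ together with every pair of edges underlying an angle constraint has been converted into a plain distance edge in $\bar{\mathcal{E}}$. The starting point of the argument is therefore to invoke Lemma~\ref{Lemma:rankConditionForIR}: since $R_D(\mbf{p})$ has rank $3n-6$, the framework $(\bar{\mathcal{G}},\mbf{p})$ is infinitesimally rigid in $\mathbb{R}^3$.

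Next I would use the standard fact, stated in the surrounding text, that infinitesimal rigidity is a sufficient condition for rigidity. Thus from infinitesimal rigidity of $(\bar{\mathcal{G}},\mbf{p})$ we conclude that $(\bar{\mathcal{G}},\mbf{p})$ is rigid in $\mathbb{R}^3$. At this point the proof is essentially complete once we apply the equivalence already proved in Corollary~\ref{COR:WRbyRigidity^3}: the framework $(\mathcal{G},\mbf{p})$ is weakly rigid in $\mathbb{R}^3$ if and only if $(\bar{\mathcal{G}},\mbf{p})$ is rigid in $\mathbb{R}^3$. The forward direction of that equivalence gives weak rigidity of $(\mathcal{G},\mbf{p})$ directly, which is exactly the conclusion of the theorem.

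The logical skeleton is therefore the short chain
\begin{align}
\rank R_D(\mbf{p}) = 3n-6
&\;\Longrightarrow\; (\bar{\mathcal{G}},\mbf{p}) \text{ infinitesimally rigid} \nonumber \\
&\;\Longrightarrow\; (\bar{\mathcal{G}},\mbf{p}) \text{ rigid} \nonumber \\
&\;\Longrightarrow\; (\mathcal{G},\mbf{p}) \text{ weakly rigid}, \nonumber
\end{align}
where the first implication is Lemma~\ref{Lemma:rankConditionForIR}, the second is the sufficiency of infinitesimal rigidity for rigidity, and the third is Corollary~\ref{COR:WRbyRigidity^3}. I do not expect any genuine obstacle here, since the theorem is essentially a corollary assembled from previously proven pieces; the only point requiring a little care is to make sure the rigidity matrix in the hypothesis is understood to be that of the \emph{induced} graph $\bar{\mathcal{G}}$ rather than of $\mathcal{G}$ itself, so that Lemma~\ref{Lemma:rankConditionForIR} applies to the correct framework. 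It is also worth remarking that, because infinitesimal rigidity is only sufficient (not necessary) for rigidity, the resulting statement is naturally a one-directional ``if'' rather than an ``if and only if,'' which matches the phrasing of Theorem~\ref{Thm:RImpliesWR}.
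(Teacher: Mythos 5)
Your proposal is correct and follows exactly the same three-step chain as the paper's own proof: Lemma~\ref{Lemma:rankConditionForIR} gives infinitesimal rigidity of $(\bar{\mathcal{G}},\mbf{p})$ from the rank condition, infinitesimal rigidity implies rigidity, and Corollary~\ref{COR:WRbyRigidity^3} then yields weak rigidity of $(\mathcal{G},\mbf{p})$. Your closing remarks about the one-directional nature of the statement and the need to attach $R_D(\mbf{p})$ to the induced graph $\bar{\mathcal{G}}$ are accurate but not part of the paper's argument.
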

\begin{proof}
If $\rank(R_D(\mbf{p})) = 3n - 6$ associated with $(\bar{\mathcal{G}},\mbf{p})$, then the framework $(\bar{\mathcal{G}},\mbf{p})$ is infinitesimally rigid in $\mathbb{R}^{3}$ by Lemma \ref{Lemma:rankConditionForIR}. Also, the framework $(\bar{\mathcal{G}},\mbf{p})$ is rigid since it is infinitesimally rigid. Therefore, $(\mathcal{G},\mbf{p})$ is weakly rigid in $\mathbb{R}^{3}$ by Corollary \ref{COR:WRbyRigidity^3}.
\end{proof}

A configuration $\mbf{p}$ of a graph is said to be generic if the vertex coordinates are algebraically independent over the rationals \cite{C:Hendrickson:SIAM1992}.

\begin{theorem}[\cite{C:Hendrickson:SIAM1992}]
\label{Thm:Generic_R_IR}
A framework $(\mathcal{G},\mbf{p})$ with generic configuration $\mbf{p}$ is rigid if and only if the framework is infinitesimally rigid.
\end{theorem}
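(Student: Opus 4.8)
The plan is to prove the two implications separately, noting that only the reverse direction genuinely requires genericity. First I would dispatch the easy direction, that infinitesimal rigidity implies rigidity for \emph{every} configuration. Suppose $(\mathcal{G},\mbf{p})$ is not rigid; then there is a nontrivial analytic path $\mbf{p}(t)$ with $\mbf{p}(0)=\mbf{p}$ that preserves all edge lengths $\norm{\mbf{p}_i(t)-\mbf{p}_j(t)}^2$ for $(i,j)\in\mathcal{E}$ but is not a rigid-body motion. Differentiating these constraints at $t=0$ yields a nonzero vector in the kernel of the rigidity matrix $R_D(\mbf{p})$ that is not a trivial infinitesimal motion, contradicting infinitesimal rigidity. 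Hence infinitesimal rigidity forces rigidity with no hypothesis on $\mbf{p}$.

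For the converse I would exploit genericity to guarantee that $\mbf{p}$ is a \emph{regular point} of the edge-length map $F_D$. Let $r_{\max}$ be the maximal value of $\rank R_D$ over all configurations. The locus where the rank drops below $r_{\max}$ is cut out by the simultaneous vanishing of all $r_{\max}\times r_{\max}$ minors of $R_D$, a proper algebraic subvariety defined by polynomials with rational coefficients. Since a generic $\mbf{p}$ has algebraically independent coordinates over the rationals, it cannot lie on any such proper subvariety, so $\rank R_D(\mbf{p}) = r_{\max}$ and $\mbf{p}$ is regular.

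At a regular point the constant-rank theorem applies: locally $F_D^{-1}(F_D(\mbf{p}))$ is a smooth manifold of dimension $3n - r_{\max}$. The orbit of $\mbf{p}$ under the group of Euclidean congruences is a smooth submanifold of this solution set, and because genericity ensures the points affinely span $\mathbb{R}^3$, this orbit has full dimension $6=\binom{4}{2}$, which is exactly the dimension of the space of trivial infinitesimal motions. Rigidity says precisely that, near $\mbf{p}$, the solution manifold \emph{coincides} with the congruence orbit; since one is contained in the other and both are smooth, this holds if and only if their dimensions agree, i.e. $3n - r_{\max} = 6$, equivalently $r_{\max} = 3n-6$. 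But $\rank R_D(\mbf{p}) = 3n-6$ is exactly the condition of Lemma~\ref{Lemma:rankConditionForIR}, namely infinitesimal rigidity. This closes the equivalence.

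The main obstacle I anticipate is the bridge between the finite, path-based notion of rigidity and the linear-algebraic notion of infinitesimal rigidity: at a \emph{non-regular} configuration the solution set need not be a manifold and the dimension count collapses, which is precisely why rigidity and infinitesimal rigidity can diverge there. The argument therefore hinges on the genericity-to-regularity reduction, and on verifying that the congruence orbit genuinely has dimension $6$ (equivalently, that the trivial motions account for the full $6$-dimensional kernel of $R_D(\mbf{p})$), which in turn requires the affine-span property that genericity supplies.
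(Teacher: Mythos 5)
First, a point of comparison: the paper does not prove this statement at all --- it is imported verbatim from the cited reference (Hendrickson, going back to Asimow--Roth), so there is no in-paper argument to measure yours against. Judged on its own terms, your second paragraph and the dimension count that follows are essentially the classical Asimow--Roth proof of the hard direction, and they are sound: genericity kills every nonzero rational-coefficient polynomial, hence every non-identically-vanishing $r_{\max}\times r_{\max}$ minor, so $\mbf{p}$ is a regular point; the fiber of $F_D$ is then locally a manifold of dimension $3n-r_{\max}$ containing the $6$-dimensional congruence orbit, and rigidity forces the two to coincide, giving $r_{\max}=3n-6$. (One small slip: for $n=3$ the points cannot affinely span $\mathbb{R}^3$; what you actually need, and what genericity does give, is that they are not collinear, which already makes the stabilizer in the Euclidean group discrete and the orbit $6$-dimensional.)

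The genuine gap is in your first paragraph. From the existence of a nontrivial analytic flex $\mbf{p}(t)$ you conclude that $\dot{\mbf{p}}(0)$ is a nonzero, nontrivial infinitesimal motion, but neither claim follows: reparametrizing a flex as $\mbf{p}(t^2)$ gives a nontrivial analytic flex with $\dot{\mbf{p}}(0)=0$, and even after passing to the first nonvanishing Taylor coefficient (which does lie in $\ker R_D(\mbf{p})$), that coefficient can be a \emph{trivial} infinitesimal motion while the nontriviality of the flex only appears at higher order. Ruling this out is exactly the nontrivial content of ``infinitesimally rigid $\Rightarrow$ rigid,'' and the standard way to do it is not by differentiating a flex but by the same regular-point argument you already use for the converse: infinitesimal rigidity means $\rank R_D(\mbf{p})=3n-6$, which is the maximal possible rank, so $\mbf{p}$ is automatically a regular point (no genericity needed), the fiber is locally a $6$-manifold containing the $6$-dimensional orbit, and hence coincides with it. Replacing your first paragraph by this observation closes the gap and, as a bonus, makes visible that only the converse direction actually uses genericity.
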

Therefore, if a configuration $\mbf{p}$ of a graph is generic, then we can state the following result.
\begin{corollary}[Generic Property of Graph]
\label{Corol:IRiffWRGenericProp}
If a configuration $\mbf{p}$ is generic, then $(\mathcal{G},\mbf{p})$ with $n \ge 3$ is weakly rigid in $\mathbb{R}^{3}$ if and only if the rigidity matrix $R_D(\mbf{p})$ associated with $(\bar{\mathcal{G}},\mbf{p})$ has rank $3n - 6$.
\end{corollary}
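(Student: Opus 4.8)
The plan is to prove the biconditional by treating the two implications separately, observing that the genericity hypothesis is required only for one of them.

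For the sufficiency (``if'') direction I would invoke Theorem~\ref{Thm:RImpliesWR} directly: if the rigidity matrix $R_D(\mbf{p})$ associated with $(\bar{\mathcal{G}},\mbf{p})$ has rank $3n-6$, then $(\mathcal{G},\mbf{p})$ is weakly rigid in $\mathbb{R}^{3}$. No appeal to genericity is needed here, since Theorem~\ref{Thm:RImpliesWR} already holds for an arbitrary configuration. The entire burden of the corollary therefore lies in the converse.

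For the necessity (``only if'') direction I would chain three earlier results. First, assuming $(\mathcal{G},\mbf{p})$ is weakly rigid in $\mathbb{R}^{3}$, Corollary~\ref{COR:WRbyRigidity^3} gives that the induced framework $(\bar{\mathcal{G}},\mbf{p})$ is rigid in $\mathbb{R}^{3}$. Second, since $\mbf{p}$ is generic, Theorem~\ref{Thm:Generic_R_IR} applies to $(\bar{\mathcal{G}},\mbf{p})$ and upgrades its rigidity to infinitesimal rigidity. Third, Lemma~\ref{Lemma:rankConditionForIR} then forces $\rank(R_D(\mbf{p}))=3n-6$ for the rigidity matrix associated with $(\bar{\mathcal{G}},\mbf{p})$, which is exactly the claimed rank condition.

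The single point that deserves care—and the closest thing to an obstacle—is the genericity-transfer step feeding into Theorem~\ref{Thm:Generic_R_IR}: one must confirm that a configuration generic for $(\mathcal{G},\mbf{p})$ is also generic for $(\bar{\mathcal{G}},\mbf{p})$. This is immediate because genericity constrains only the vertex coordinates (their algebraic independence over the rationals) and $\bar{\mathcal{G}}$ merely augments the edge set while preserving $\bar{\mathcal{V}}=\mathcal{V}$; the same $\mbf{p}$ is therefore generic for both graphs. With this observation, the chain Corollary~\ref{COR:WRbyRigidity^3} $\Rightarrow$ Theorem~\ref{Thm:Generic_R_IR} $\Rightarrow$ Lemma~\ref{Lemma:rankConditionForIR} closes the necessity direction, and combined with Theorem~\ref{Thm:RImpliesWR} for sufficiency the corollary follows.
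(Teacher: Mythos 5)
Your proposal is correct and follows essentially the same route as the paper: Theorem~\ref{Thm:RImpliesWR} for sufficiency, and the chain Corollary~\ref{COR:WRbyRigidity^3} $\Rightarrow$ Theorem~\ref{Thm:Generic_R_IR} $\Rightarrow$ Lemma~\ref{Lemma:rankConditionForIR} for necessity, which is exactly what the paper's terser argument invokes. Your explicit remark that genericity of $\mbf{p}$ transfers from $(\mathcal{G},\mbf{p})$ to $(\bar{\mathcal{G}},\mbf{p})$ because it depends only on the vertex coordinates is a detail the paper leaves implicit, and it is the right justification.
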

\begin{proof}
Suppose that a given framework $(\mathcal{G},\mbf{p})$ with generic configuration $\mbf{p}$ is rigid. Then, the framework is infinitesimally rigid and vice versa\cite{C:Hendrickson:SIAM1992, C:Asimow:JMAA1979}. Therefore, with Theorems \ref{Thm:RImpliesWR} and \ref{Thm:Generic_R_IR}, if a configuration $\mbf{p}$ is generic, then the rank condition of rigidity matrix $R_D(\mbf{p})$ becomes a necessary and sufficient condition for weak rigidity of $(\mathcal{G},\mbf{p})$.
\end{proof}

\subsection{Globally Weak Rigidity in $\mathbb{R}^{3}$ } \label{Thm:GWRbyRigidity^3}
We can also extend the local concept of the weak rigidity to the global concept.
With reference to \cite{park2017rigidity}, global weak rigidity can be defined and proved as follows.
\begin{definition}[Global weak rigidity]
\label{Def:globalWeakRigidity}
\qquad \qquad \qquad \qquad A framework $(\mathcal{G},\mbf{p})$ is \myemph{globally weakly rigid} in $\mathbb{R}^{3}$ if any framework $(\mathcal{G},\mbf{q})$, $\mbf{q} \in \mathbb{R}^{3n}$, strongly equivalent to $(\mathcal{G},\mbf{p})$ is congruent to $(\mathcal{G},\mbf{p})$.
\end{definition} 

As Corollary \ref{COR:WRbyRigidity^3} is proved, the following theorem can be also proved easily.
\begin{theorem}
\label{Thm:GWRbyGRigidity} 
A framework $(\mathcal{G},\mbf{p})$ is globally weakly rigid in $\mathbb{R}^{3}$ if and only if $(\bar{\mathcal{G}},\mbf{p})$ is globally rigid in $\mathbb{R}^{3}$. 
\end{theorem}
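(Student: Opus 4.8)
The plan is to reuse the correspondence behind Corollary~\ref{COR:WRbyRigidity^3} and simply remove the locality: replacing the neighborhood $\mathcal{B}_{\mbf{p}}$ by all of $\mathbb{R}^{3n}$ turns weak rigidity into global weak rigidity and rigidity into global rigidity, so it suffices to show that the two \emph{global} conditions match term by term. The whole argument rests on a single configuration-wise equivalence, which I would state and prove first, and then read off the biconditional by unwinding Definition~\ref{Def:globalWeakRigidity} and the standard definition of global rigidity of $(\bar{\mathcal{G}},\mbf{p})$.

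The key claim is: for every $\mbf{q}\in\mathbb{R}^{3n}$, the framework $(\mathcal{G},\mbf{q})$ is strongly equivalent to $(\mathcal{G},\mbf{p})$ if and only if $(\bar{\mathcal{G}},\mbf{q})$ is equivalent to $(\bar{\mathcal{G}},\mbf{p})$. For the forward direction, strong equivalence preserves $\norm{\mbf{p}_v-\mbf{p}_w}$ for all $(v,w)\in\mathcal{E}\subseteq\bar{\mathcal{E}}$ and preserves every subtended angle $\theta_{ij}^{k}$, $(k,i,j)\in\mathcal{A}$. For each such angle the adjacent edges $(i,k),(j,k)$ belong to $\mathcal{E}$ (angles are defined with adjacent edges in the theory of \cite{park2017rigidity}), hence to $\bar{\mathcal{E}}$, so their lengths are preserved; the law of cosines
$$
\norm{\mbf{p}_i-\mbf{p}_j}^2=\norm{\mbf{p}_i-\mbf{p}_k}^2+\norm{\mbf{p}_j-\mbf{p}_k}^2-2\norm{\mbf{p}_i-\mbf{p}_k}\norm{\mbf{p}_j-\mbf{p}_k}\cos\theta_{ij}^{k}
$$
then forces $\norm{\mbf{p}_i-\mbf{p}_j}$ to be preserved too, i.e. the added edge $(i,j)\in\bar{\mathcal{E}}$ has preserved length. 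Thus all $\bar{\mathcal{E}}$-lengths agree, which is exactly equivalence of $(\bar{\mathcal{G}},\mbf{q})$ and $(\bar{\mathcal{G}},\mbf{p})$. For the converse, equivalence on $\bar{\mathcal{G}}$ preserves all three side lengths of each such triangle; solving the same relation for $\cos\theta_{ij}^{k}$ and using that $\cos$ is injective on $[0,\pi]$ recovers the angle, giving strong equivalence.

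To finish, I would chain the definitions. $(\mathcal{G},\mbf{p})$ is globally weakly rigid if and only if every $\mbf{q}$ strongly equivalent to $\mbf{p}$ is congruent to $\mbf{p}$; by the key claim this holds if and only if every $\mbf{q}$ with $(\bar{\mathcal{G}},\mbf{q})$ equivalent to $(\bar{\mathcal{G}},\mbf{p})$ is congruent to $\mbf{p}$, which is precisely global rigidity of $(\bar{\mathcal{G}},\mbf{p})$. Here I would note that no translation of the congruence notion is required: congruence refers only to the shared vertex set $\mathcal{V}$ and to all pairwise distances $\norm{\mbf{p}_i-\mbf{p}_j}$, so it is literally the same condition for $(\mathcal{G},\cdot)$ and $(\bar{\mathcal{G}},\cdot)$.

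The main obstacle is purely the bookkeeping around the law-of-cosines correspondence: one must check it is a genuine bijection between the closing length $\norm{\mbf{p}_i-\mbf{p}_j}$ and the angle $\theta_{ij}^{k}$ when the adjacent lengths are held fixed and strictly positive. Positivity follows from the standing assumption that no two position vectors are collocated, and monotonicity of $\cos$ on $[0,\pi]$ supplies injectivity; the only point requiring attention is confirming that $\bar{\mathcal{E}}$ contains the adjacent edges $(i,k),(j,k)$ for every angle, which holds because those edges already lie in $\mathcal{E}$ by the construction of $\bar{\mathcal{G}}$. Beyond this, the result is a neighborhood-free restatement of the reasoning used for Corollary~\ref{COR:WRbyRigidity^3}.
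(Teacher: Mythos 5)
Your proposal is correct and takes essentially the same route as the paper: the paper's proof simply defers to Corollary~\ref{COR:WRbyRigidity^3} (itself deferred to Theorem 1 of \cite{park2017rigidity}) with the neighborhood $\mathcal{B}_{\mbf{p}}$ replaced by $\mathbb{R}^{3n}$, which is precisely the law-of-cosines correspondence between strong equivalence of $(\mathcal{G},\cdot)$ and equivalence of $(\bar{\mathcal{G}},\cdot)$ that you establish. You merely spell out the details the paper leaves to the citation.
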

\begin{proof}
The proof is similar to the proof of Corollary \ref{COR:WRbyRigidity^3} except that $\mathcal{B}_{\mbf{p}}$ is replaced by $\mathbb{R}^{3\card{\mathcal{V}}}$.
\end{proof}

\section{CONCLUSIONS}
\label{Sec:conclusion}

We have shown four main results in the paper. First, we introduced the infinitesimal weak rigidity in the two-dimensional space. In the original weak rigidity theory \cite{park2017rigidity}, a framework with constraints of two adjacent edges and a subtended angle must be defined and transformed into a three distance constrained in order to check whether the framework is weakly rigid or not. On the contrary, the infinitesimal weak rigidity of a framework can be directly checked by a rank condition of the weak rigidity matrix associated with the framework. For the infinitesimal weak rigidity, adjacent edges do not need to be defined, that is, a framework with only angle constraints can be also infinitesimally weakly rigid. As the second result, we explored the three-agent formation control using the gradient control law in the two-dimensional space and showed that the formation system exponentially converges to the desired target formation from almost global initial positions. As the third result, we proposed the modified Henneberg construction to build up minimally (weakly) rigid frameworks.
Finally, we extended the weak rigidity in $\mathbb{R}^{2}$ to the concept in $\mathbb{R}^{3}$. The final result shows that (locally) unique formation shape of a framework in $\mathbb{R}^{3}$ can be obtained by the weak rigidity theory even if the framework is not rigid in the viewpoint of (distance) rigidity.


\addtolength{\textheight}{-12cm}   

\section*{ACKNOWLEDGMENT}
The authors would like to thank Zhiyong Sun and Myoung-Chul Park for helpful discussions. 

\bibliographystyle{IEEEtran}
\bibliography{Weak_rigidity_2018}

\end{document}